\newtheorem{theorem}{Theorem} 
\newtheorem{corollary}[theorem]{Corollary} 
\newtheorem{proposition}[theorem]{Proposition} 
\newtheorem{lemma}[theorem]{Lemma} 
\newtheorem{observ}[theorem]{Observation}
\newenvironment{proof}{{\bf Proof:\ }}{\hfill $\Box$\linebreak\vskip2mm}
\newcommand{\goes}{\mathop{\longrightarrow}\limits}
\def\choose#1#2{\begin{pmatrix} #1 \\ #2 \end{pmatrix}}
\newcommand{\vu}{\vec{u}}
\newcommand{\vv}{\vec{v}}
\def\Prob#1{{\bf P}\left(#1\right)}
\def\Expect#1{{\bf E}\left(#1\right)}
\def\OPT{\textsf{OPT}}
\def\sat{\textsf{sat}}
\let\vr=\varrho
\let\kp=\kappa
\let\al=\alpha
\let\vf=\varphi
\let\gm=\gamma
\def\vc#1#2{#1 _1\zd #1 _{#2}}
\def\zd{,\ldots,}
\def\Pl{{\Phi^{\sf plant}(n,\vr n)}}
\def\ov#1{{\overline{#1}}}
\newcommand{\Esc}{\mathcal{E}}
\let\lb=\linebreak
\begin{document}
\title{Phase transition for Local Search on planted SAT}

\author{Andrei A.\ Bulatov \hspace*{2cm} Evgeny S. Skvortsov\\
Simon Fraser University\\ 
\{abulatov,evgenys\}@cs.sfu.ca}
\date{}
\maketitle

\begin{abstract}
The Local Search algorithm (or Hill Climbing, or Iterative Improvement) is one of the simplest heuristics to solve the Satisfiability and Max-Satisfiability problems. It is a
part of many satisfiability and max-satisfiability solvers, where it is used
to find a good starting point for a more sophisticated heuristics, and to
improve a candidate solution. In this paper we give an analysis of Local
Search on random planted 3-CNF formulas. We show that if there is $\kappa<\frac76$
such that the clause-to-variable  ratio is less than $\kappa\ln n$ ($n$ is the number
of variables in a CNF) then Local Search whp does not find a satisfying assignment,
and if there is $\kappa>\frac76$ such that the clause-to-variable  ratio is greater
than $\kappa\ln n$ then the local search whp finds a satisfying
assignment. As a byproduct
we also show that for any constant $\vr$ there is $\gm$ such that Local
Search applied to a random (not necessarily planted) 3-CNF with
clause-to-variable ratio $\vr$ produces an assignment that satisfies at least
$\gm n$ clauses less than the maximal number of satisfiable clauses.
\end{abstract}

\section{Introduction}

A CNF formula over variables $\vc xn$ is a conjunction of clauses $\vc cm$ where
each clause is a disjunction of one or more literals. A formula is said to be a
$k$-CNF if every clause contains exactly $k$ literals. In the problem $k$-SAT
the question is, given a $k$-CNF, decide if it has a satisfying assignment
(find such an assignment for the search problem). In the MAX-$k$-SAT problem
the goal is to find an assignment that satisfies as many clauses as possible.
The problem $k$-SAT for $k\ge3$ is one of the first problems proved to
be NP-complete problems and serves as a model problem for many
algorithm and complexity concepts since then.
In particular, H\aa stad \cite{Hastad01:inapproximability} proved that
the MAX-$k$-SAT problem is  NP-hard to approximate within ratio better
than 7/8. These worst case hardness results motivate the study of the
typical case complexity of those problems, and a quest for
probabilistic or heuristic algorithms with satisfactory  performance,
in the typical case. In this paper we analyze the
performance of one of the simplest algorithms for (MAX-)$k$-SAT, the
Local Search algorithm, on random planted instances.

\paragraph{The distribution.}
Let us start with planted
instances. One of the most natural
and well studied probability distributions on the set of 3-CNFs is
the uniform distribution $\Phi(n,m(n))$ on the set of 3-CNFs with a
given clauses-to-variables ratio \cite{Franco83:probabilistic}. It can
be constructed and sampled as follows. Fix the number $m=m(n)$ of 3-clauses as
a function of the number $n$ of variables. The elements of
$\Phi(n,m(n))$ are 3-CNFs generated by selecting $m=m(n)$ clauses over variables
$\vc xn$. Clauses are chosen uniformly at random from the set of possible clauses,
and so the probability of every 3-CNF from $\Phi(n,m(n))$ is the same. An important
parameter of such CNFs is the \emph{clause-to-variable ratio}, $\frac mn$, or
\emph{density} of the formula. We will use the density of a 3-CNF rather than the
number of clauses, and so we write $\Phi(n,\vr n)$ instead of $\Phi(n,m(n))$.
Density can also be a function of $n$.

However, the typical case complexity for this distribution is not very
interesting except for a very narrow range of densities. The reason is that the random 3-SAT under this distribution demonstrates a
sharp satisfiability threshold in the density \cite{Achlioptas99:threshold}. A random
3-CNF with density below the threshold (estimated to be around 4.2) is
satisfiable whp (with high probability,
meaning that the probability tends to 1 as $n$ goes to infinity), and a 3-CNF with
density above the threshold is unsatisfiable whp. Therefore the
trivial algorithm outputting yes or no by just counting the density of
a 3-CNF gives a right answer to 3-SAT whp. For more results on the threshold
see
\cite{Crawford96:crossover,Dubois00:typical,Achlioptas01:lower,Kaporis02:greedy}.
It is also
known that, as the density grows, the number of clauses satisfied by a random
assignment differs less and less from the maximal number of satisfiable clauses.
If density is \emph{infinite} (meaning it is an unbounded function of $n$), then
whp this difference becomes negligible, i.e.\ $o(n)$. Therefore, distribution
$\Phi(n,\vr n)$  is not very interesting for MAX-3-SAT, at least when
density is large, as one can get whp a very good approximation just by
checking a random assignment.

A more interesting and useful distribution is obtained from
$\Phi(n,\vr n)$ by conditioning
on satisfiability: such distribution is uniform and its elements are
the satisfiable 3-CNFs. Then the problem is to find or approximate a satisfying
assignment knowing it exists. Unfortunately, to date there are no techniques to
tackle such problems (see, e.g.,
\cite{Ben-Sasson02:planted,Chen03:above}), particularly, to sample the satisfiable
distribution. A
good approximation for such a distribution is the
planted distribution $\Pl$, which is obtained from $\Phi(n, \vr n)$ by conditioning
on satisfiability by a specific ``planted'' assignment. To construct
an element of a planted distribution we select an assignment of a set
of $n$ variables and then uniformly at random include $\vr n$ clauses
satisfied by the assignment selected. Some attempts have been made to
define a better approximation of the satisfiable distribution, see,
e.g.\ \cite{Krivelevich08:satisfiable}, however, the analysis of such
distributions is difficult and it is not clear if they are closer to
the distribution sought.

Another interesting feature of the planted distribution is that there
is a hope that it is possible to design an algorithm that solves
all planted instances whp. Some candidate algorithms were suggested in
\cite{Ben-Sasson02:planted,Flaxman03:spectral,Krivelevich06:polynomial}. Algorithm
from \cite{Flaxman03:spectral} and \cite{Krivelevich06:polynomial} use different
approaches to solve planted 3-SAT of high density. Experiments show that the
algorithm from \cite{Skvortsov07:random} achieves the goal, but a rigorous analysis
of this algorithm is not yet made. For a wider survey on SAT
algorithms the reader is referred to \cite{Mitchell05:primer,Braunstein05:survey}.

\paragraph{The algorithm.}
The Local Search algorithm (LS) is one of the
oldest heuristics for SAT that has been around since the
eighties. Numerous variations of  this method have been proposed since
then, see, e.g., \cite{Gu92:efficient,Selman92:new}. We study one of the
most basic versions of LS, which, given a CNF, starts with a random assignment
to its variables, and then on each step chooses at random a variable
such that flipping this variable increases the number of satisfied
clauses, or stops if such a variable does not exist. Thus LS finds a
random local optimum accessible from the initial assignment.

LS has been studied before. The worst-case performance of pure LS is not very good:
the only known lower bound for local optima of a $k$-CNF is $\frac k{k+1}m$ of clauses satisfied, where $m$ is
the number of all clauses \cite{Hansen90:algorithms}. In \cite{Koutsoupias92:greedy}, it
is shown that if density of 3-CNFs is linear, that is,  $m=\Omega(n^2)$, then LS solves
whp a random planted instance. Finally, in \cite{Bulatov06:efficiency}, we gave an
estimation of the dependence of the number of clauses LS typically satisfies and
the density of the formula.

Often visualization of the number of clauses satisfied by an
assignment is useful: Assignments can be thought of as points of a landscape,
and the elevation of a point corresponds to the number of clauses
unsatisfied, the higher the point is, the less clauses it
satisfies. It is suspected that `topographic' properties of such a
landscape are responsible for many complexity properties of
satisfiability instances. For example, it is believed that the
hardness of random CNFs whose density is close to the satisfiability
threshold is due to the geometry of the satisfying assignments. They
tend to concentrate around several centers, that make converging to a
solution more difficult
\cite{Braunstein05:survey,Mezard05:clustering}. As we shall see the
performance of LS is closely related to geometric properties of the
assignments, and so we hope that the study of LS may lead to a better
understanding of those properties.

The behavior of other SAT/MAXSAT algorithms have been studied before. For example, the random walk has been analyzed in \cite{Papadimitriou91:selecting} and then in \cite{Alekhnovich07:upper}. A message passing type algorithm, Warning Propagation, is studied in \cite{Feige06:convergence}.

\paragraph{Our contribution.}
We classify the performance of LS for all densities higher than an arbitrary constant.
In particular, we demonstrate that LS has a threshold in its
performance. The main result  is the following theorem.

\begin{theorem}\label{the:main}
(1) Let $\vr\ge \kp\cdot\ln n$, and $\kp>\frac76$. Then the local search whp finds
a solution of an instance from $\Pl$.\\[1mm]
(2) Let $c \le \vr\le \kp\cdot\ln n$, $c$ a constant, and $\kp<\frac76$. Then the local search whp does
not find a solution of an instance from $\Pl$.
\end{theorem}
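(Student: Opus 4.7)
The plan is to track the Hamming distance $d=d(A,A^*)$ between the running LS assignment $A$ and the planted satisfying assignment $A^*$. For each variable $x_i$, let $b^-_i(A)$ be the number of clauses containing $x_i$ currently falsified by $A$, and $b^+_i(A)$ the number of clauses in which $x_i$'s literal is currently the unique satisfied literal; the gain of flipping $x_i$ is $b^-_i(A)-b^+_i(A)$. The central combinatorial input is that, in $\Pl$, each clause's sign pattern is uniform over the $7$ patterns not falsifying $A^*$, so for a clause $C$ on variables $\{x_i,x_j,x_k\}$ an elementary count gives: $C$ is falsified by $A$ with probability $\tfrac17$ iff $A\neq A^*$ on at least one of these three variables, and similarly for the event that $C$ is uniquely satisfied by $x_i$'s literal under $A$. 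From this one derives, for $A$ at fractional distance $\epsilon=d/n$ from $A^*$, expected gain $\approx +\tfrac{(1-\epsilon)^2}{7}\deg(x_i)$ for a wrong variable $x_i$ and $\approx -\tfrac{(1-\epsilon)^2}{7}\deg(x_i)$ for a right one, with variance of order $\deg(x_i)=\Theta(\vr)$.

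For Part~1 ($\vr\ge\kp\ln n$, $\kp>7/6$), a Chernoff bound on these gain sums shows that the probability a single variable's gain has the ``wrong'' sign is at most $n^{-c\kp}$ for an explicit constant $c$. Union-bounding over the $n$ variables and the at most $O(n\log n)$ flips that LS can perform, one concludes whp that along the entire LS trajectory every wrong variable has strictly positive gain and every right variable has non-positive gain. Hence LS flips only wrong variables, $d$ decreases monotonically, and LS terminates exactly at $A^*$. The value $7/6$ emerges from matching the exponent of the single-variable concentration bound against the $\log n$ cost of the union bound. For Part~2 ($\vr\le\kp\ln n$, $\kp<7/6$), the plan is dual: compute the expected number of assignments at small positive distance from $A^*$ that are local optima. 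Using the same single-clause probabilities, this expectation is $\omega(1)$ at an appropriately chosen distance once $\kp<7/6$, and a second-moment or conditioning argument then delivers such a local optimum whp. One then shows that LS from a random initial assignment is with positive probability attracted to one of these false local optima rather than to $A^*$, so LS whp fails to find a satisfying assignment.

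The main obstacle is the dependence structure along the LS trajectory: the decision at step $t$ depends on which clauses were involved in earlier steps, so a naive application of Chernoff step-by-step double-counts the same randomness. I would handle this by the method of deferred decisions, exposing the sign pattern of each clause only when LS first inspects a variable of that clause, thereby preserving the uniform conditional distribution needed for the concentration bounds above. A secondary challenge in Part~2 is verifying that the false local optima produced by the moment argument are actually reachable from a random initial assignment (rather than being isolated points bypassed by LS); this likely requires analyzing the ``basin of attraction'' of each false local optimum, e.g.\ by tracking the joint evolution of $A$ and its distance to the nearest false local optimum along the descent.
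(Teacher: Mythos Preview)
Your Part~(1) plan has a real gap. The union bound you propose --- over ``the $n$ variables and the at most $O(n\log n)$ flips that LS can perform'' --- is not a union bound over a fixed family of events: which assignment LS is at on step $t$ is a function of the formula, so the event ``variable $x_i$ has wrong-sign gain at step $t$'' is not defined until the formula is revealed. Deferred decisions does not rescue this: once a clause has been inspected (which happens the first time any of its three variables is considered), its sign pattern is fixed, and by the time LS is anywhere near the planted solution essentially every clause has been exposed, so there is no fresh randomness to apply Chernoff to. You would be forced to union bound over all assignments LS could conceivably visit, and that set is exponential.

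There is also a quantitative problem that surfaces even if the dependence issue were somehow handled. Near the planted assignment (small $\epsilon$), the probability that a single wrong variable has non-positive gain is governed by the event that it occurs positively in \emph{no} $(+,-,-)$ clause; this probability is $\sim e^{-3\varrho/7}=n^{-3\kappa/7}$. A union bound over $n$ variables then requires $\kappa>7/3$, not $7/6$. The paper gets the sharp constant by a different mechanism: it does not try to control the sign of the gain at every step, but instead proves a static property of the formula --- that no \emph{proper local minimum} exists with between $n^\alpha$ and $\tfrac{9}{10}n$ zeros (Proposition~\ref{pro:minima1}). A local minimum forces a chain of support clauses, so the obstruction is a \emph{pair} of nearby isolated variables (Lemma~\ref{lem:scarcity}), and the relevant probability is $n^{1-6\kappa/7}$, yielding the threshold $7/6$. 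LS is then handled deterministically: it monotonically decreases the number of unsatisfied clauses, so from a random start it cannot cross into the far region where proper local minima live (Lemma~\ref{lem:no-zero-flood}). No per-step concentration along the trajectory is needed.

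For Part~(2) your outline is in the right spirit (the paper's ``caps'' are exactly the second-moment objects you anticipate), but the sentence ``LS from a random initial assignment is with positive probability attracted to one of these false local optima'' both understates the goal --- you need \emph{whp}, not positive probability --- and hides the main difficulty. The paper spends most of Section~4 on this: it couples LS with a non-adaptive ``Straight Descent'' process (whose position at each step is a uniformly random tuple with a given number of ones, Lemma~\ref{lem:randomness}), shows the two stay within $n^\alpha$ of each other (Lemma~\ref{lem:coupling1}), and uses this to argue that the neighbourhood of each cap is discovered in a uniformly random order, so that with probability $\Omega(\ln^{-6}n)$ per cap the cap becomes a trap. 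Independence across $n^\mu$ caps then gives failure whp. Your ``basin of attraction'' remark gestures at this but does not supply the key idea of coupling to a process whose randomness is decoupled from the formula.
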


To prove part (1) of the theorem~\ref{the:main} we show that under
those conditions all the local optima of a 3-CNF whp are either
satisfying assignments, that is, global optima, or obtained by
flipping almost all the values of planted solution, and so are located
on the opposite side of the set of assignments. In the former case
LS finds a satisfying assignment, while whp it does not reach the local
optima of the second type. We also show
that that for any constant density $\vr$ there is $\gm$ such that
the assignment produced by LS on an instance from $\Pl$ or $\Phi(n,\vr n)$
satisfies at least $\gm n$ clauses less than the maximal number of satisfiable
clauses. Unfortunately, it is somewhat difficult to run computational
experiments on CNFs of infinite density, as in order to have $\log n$
sufficiently large $n$ must be prohibitively big. However, experiments
we were able to conduct agree with the results.

Another region where LS can find a solution of the random planted 3-CNF is the case of very low density. Methods similar to Lemma~\ref{lem:close-isolated} and Theorem~\ref{the:finite-density} show that this low density transition happens around $\varrho \approx n^{-1/4}$. However, we do not go into details here.

Usually the main difficulty of analysis of algorithms for random SAT
is to show that as an algorithm runs, some kind of
randomness of the current assignment is kept. This property allows one
to use `card games', Wormald's theorem, and differential equations as in
\cite{Achlioptas01:lower,Bulatov06:efficiency}, or relatively simple probabilistic
constructions, such as martingales, as in
\cite{Alekhnovich07:upper}. For LS randomness cannot be assumed after
just a few iterations of the algorithm, which makes its analysis more
difficult. This is why the most difficult part of the proof is to
identify to which extent assignments produced by LS as it runs remain
random, while most of the probabilistic computations are fairly standard.

The paper is organized as follows. After giving several necessary definitions in
Section~2, we prove in Section~3, that above the threshold established
in Theorem~\ref{the:main} planted 3-CNFs do not have local optima that
can be found  by LS, other than satisfying assignments. In Section~4
we show that below the  threshold there are many such optima, and that
LS necessarily gets stuck into one of them. 

\section{Preliminaries}

\paragraph{SAT.}
A 3-CNF is a conjunction of \emph{3-clauses}. As we consider only 3-CNFs, we
will always call them just clauses. Depending on the number
of negated literals, we distinguish 4 types of clauses:
$(-,-,-),(+,-,-)$, $(+,+,-)$, and
$(+,+,+)$. If $\vf$ is a 3-CNF over variables $\vc xn$, an \emph{assignment} of these
variables is a Boolean $n$-tuple $\vu=(\vc un)$, so the value of $x_i$ is $u_i$.
The \emph{density} of a 3-CNF $\vf$ is the number $\frac mn$ where $m$ is the
number of clauses, and $n$ is the number of variables in $\vf$.

The \emph{uniform} distribution of 3-CNFs of density $\vr$ (density may be a
function of $n$), $\Phi(n,\vr n)$ is the set of all 3-CNFs containing $n$ variables
and $\vr n$ clauses equipped with the uniform probability distribution on this set.
To sample a 3-CNF accordingly to $\Phi(n,\vr n)$ one chooses uniformly and
independently $\vr n$ clauses out of $2^3\choose n3$ possible clauses. Thus,
we allow repetitions of clauses, but not repetitions of variables within a clause.
\emph{Random 3-SAT} is the problem of deciding the satisfiability of a 3-CNF
randomly sampled accordingly to $\Phi(n,\vr n)$. For short, we will call such a
random formula a 3-CNF from $\Phi(n,\vr n)$.

The \emph{uniform planted} distribution of 3-CNF of density $\vr$ is constructed as
follows. First, choose at random a Boolean $n$-tuple $\vu$, a \emph{planted}
satisfying assignment. Then let $\Phi^{\sf plant}(n,\vr n,\vu)$ be the
uniform probability distribution over the set of all 3-CNFs over
variables $\vc xn$ with density $\vr$ and such that $\vu$ is a   satisfying assignment.
For our goals we can always assume that $\vu$ is the all-ones tuple, that is a 3-CNF
belongs to $\Phi^{\sf plant}(n,\vr n,\vu)$  if and only if it contains no clauses
of the type $(-,-,-)$. We also simplify the notation $\Phi^{\sf plant}(n,\vr n,\vu)$
by $\Pl$. To sample a 3-CNF accordingly to $\Pl$ one chooses uniformly and
independently $\vr n$ clauses out of $7\choose n3$ possible clauses of types
$(+,-,-),(+,+,-)$, and $(+,+,+)$. \emph{Random Planted 3-SAT} is the problem
of deciding the satisfiability of a 3-CNF from $\Pl$.

The problems \emph{Random MAX-3-SAT} and \emph{Random Planted MAX-3-SAT} are
the optimization versions of Random 3-SAT and Random Planted 3-SAT. The goal
in these problems is to find an assignment that satisfies as many clauses as
possible. Although the two problems usually are treated as maximization problems,
it will be convenient for us to consider them as problems of minimizing the number
of unsatisfied clauses. Since we always evaluate the absolute error of our algorithms,
not the relative one, such transformation does not affect the results.

\paragraph{Local search.}
A formal description of the Local Search algorithm (LS) is given in
Fig.~\ref{fig:LS}.
\begin{figure}[t]
\begin{tabbing}
{\sc Input:} 3-SAT formula $\vf$ over variables $\vc xn$.\\
{\sc Output:} Boolean $n$-tuple $\vv$, which is a local minimum of $\vf$.\\
{\sc Algorithm:} \\
{\bf choose} uniformly at random a Boolean $n$-tuple $\vu$\\
{\bf let} $U$ be the set of all variables $x_i$
such that the number of
clauses that can be made satisfied\\
\ \ \ \  by flipping the value of $x_i$ is strictly greater
than the number of those made unsatisfied\\
{\bf while} $U$ is not empty\\
\ \ \ \= {\bf pick} uniformly at random a variable $x_j$ from $U$\\
\> {\bf change} the value of $x_j$\\
\> {\bf recompute} $U$
\end{tabbing}
\caption{Local Search}
\label{fig:LS}
\end{figure}
Observe that LS stops when reaches a local minimum of the
number of unsatisfied clauses.

Given an assignment $\vu$ and a clause $c$ it will be convenient to
say that $c$ \emph{votes} for a variable $x_i$ to have value 1 if $c$
contains literal $x_i$ and its other two literals are unsatisfied. In
other words if either (a)
$\vu$ assigns $x_i$ to 0, $c$ is not satisfied by $\vu$,
  and it will be satisfied if the value of $x_i$ is changed, or
(b) the only literal in $c$ satisfied by $\vu$ is $x_i$. Similarly, we say that $c$ votes for $x_i$ if $c$ contains the negation of $x_i$ and its other two literals are not satisfied.
Using this terminology we can define set $U$ as the set of all variables
such that the number of votes received to change the current value is
greater than the number of those to keep it.

\paragraph{Random graphs.}
Probabilistic tools we use are fairly standard and can be found in the
book \cite{Alon00:probabilistic}.

Let $\vf$ be a 3-CNF with variables $\vc xn$. The \emph{primal graph}
$G(\vf)$ of $\vf$ is the graph with vertex set $\{\vc xn\}$ and edge set
$\{x_ix_j\mid \hbox{literals containing } x_i,x_j\hbox{ appear in the same clause}\}$.  The \emph{hypergraph}
$H(\vf)$ \emph{associated with} $\vf$ is a hypergraph, whose vertices are the variables
of $\vf$ and the edges are the 3-element sets of variables belonging to the same
clause. Note that if $\vf\in\Pl$, then $H(\vf)$ is a random 3-hypergraph with $n$
vertices and $\vr n$ edges, but $G(n)$ is not a random graph.

We will need the following properties that a graph $G(\vf)$ of not too high density has.

\begin{lemma}\label{lem:random}
Let $\vr<\kappa\ln n$ for a certain constant $\kappa$, and let $\vf\in\Pl$.

(1) For any $\al<1$, whp all the subgraphs of $G(\vf)$ induced by at most
$O(n^\al)$ vertices have the average degree less than~5.

(2) The probability that $G(\vf)$ has a vertex of degree greater than
$\ln^2 n$ is $o(n^{-3})$.
\end{lemma}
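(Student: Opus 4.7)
The plan is a first-moment / union-bound argument for both parts.

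For part~(1), fix $\alpha<1$. Suppose some $S\subseteq\{x_1,\dots,x_n\}$ with $|S|\le O(n^\alpha)$ induces a subgraph with average degree at least $5$, i.e.\ $|E(G(\varphi)|_S)|\ge 5|S|/2$. Letting $a_S$ be the number of clauses with exactly two variables in $S$ and $b_S$ those with all three, we have $|E(G(\varphi)|_S)|\le a_S+3b_S$, so at least $k:=\lceil 5|S|/6\rceil$ clauses intersect $S$ in at least two variables. Picking such a set of $k$ clauses together with their vertices yields a sub-hypergraph $H'\subseteq H(\varphi)$ with $e'=k$ hyperedges on $v'\le|S|+k$ vertices, satisfying the density inequality
\[
3e'-v'\ \ge\ 2k-|S|\ \ge\ \tfrac{2}{3}|S|.
\]
The expected number of such sub-hypergraphs of $H(\varphi)$ (for prescribed $v',e'$) is at most
\[
\binom{n}{v'}\binom{\binom{v'}{3}}{e'}\!\left(\frac{6\varrho}{n^2}\right)^{e'} \le\ \left(\frac{v'}{n}\right)^{3e'-v'}(\ln n)^{O(e')}e^{O(v'+e')}.
\]
The factor $(v'/n)^{3e'-v'}$ contributes $n^{-\Omega((1-\alpha)|S|)}$, which dominates over the polylogarithmic and exponential-in-$|S|$ corrections once the gap $1-\alpha>0$ is used. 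Summing over $|S|\le O(n^\alpha)$ and the admissible ranges of $v',e'$ yields $o(1)$.

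For part~(2), the degree of $x$ in $G(\varphi)$ is at most $2d(x)$, where $d(x)$ counts the clauses of $\varphi$ containing $x$. Each clause independently contains $x$ with probability $3/n$, so $d(x)$ is dominated by $\mathrm{Bin}(\varrho n,3/n)$ with mean at most $3\kappa\ln n$. The Chernoff bound $P(X\ge t)\le(e\mu/t)^t$ applied with $t=\tfrac{1}{2}\ln^2 n$ gives
\[
P\!\left(d(x)>\tfrac{1}{2}\ln^2 n\right)\le\left(\frac{6e\kappa}{\ln n}\right)^{\ln^2 n/2}=\exp\bigl(-\Theta(\ln^2 n\cdot\ln\ln n)\bigr),
\]
which is $o(n^{-4})$. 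A union bound over the $n$ vertices then produces the stated $o(n^{-3})$.

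The technical effort sits in part~(1): the numerical constant $5$ is tuned so that the sub-hypergraph density inequality $3e'-v'=\Omega(|S|)$ holds, which is what lets the first-moment bound beat the $\binom{n}{v'}$ prefactor. Part~(2) is a one-shot Chernoff-plus-union-bound.
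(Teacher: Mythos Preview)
For part~(2), your argument is correct and is essentially the paper's: a direct tail bound on the number of clauses containing a fixed vertex, followed by a union bound over the $n$ vertices.

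For part~(1), however, there is a genuine gap. Your displayed inequality
\[
\binom{n}{v'}\binom{\binom{v'}{3}}{e'}\left(\frac{6\varrho}{n^2}\right)^{e'}\ \le\ \left(\frac{v'}{n}\right)^{3e'-v'}(\ln n)^{O(e')}e^{O(v'+e')}
\]
is false: a direct computation gives
\[
\binom{n}{v'}\binom{\binom{v'}{3}}{e'}\left(\frac{6\varrho}{n^2}\right)^{e'}\ \le\ \left(\frac{v'}{n}\right)^{3e'-v'}\cdot n^{e'}\cdot(\ln n)^{O(e')}e^{O(v'+e')},
\]
so you have silently dropped a factor $n^{e'}$. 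Once that factor is restored, your first-moment bound with $k=\lceil 5|S|/6\rceil$ does \emph{not} tend to zero. Taking $s=|S|$, $e'=k\approx 5s/6$ and $v'\le s+k\approx 11s/6$, the expected count comes out of order $(n/s)^{s/6}(\ln n)^{O(s)}$, which diverges for every $s\le n^{\alpha}$. The structural reason is that from ``average degree $\ge 5$'' you only extract $k<s$ clauses; whenever the number of forced clauses is below $|S|$, a probability of size $(\ell\ln n/n)^{O(k)}$ cannot beat the $\binom{n}{\ell}\approx(n/\ell)^{\ell}$ prefactor in the union bound. Your density inequality $3e'-v'\ge\frac{2}{3}|S|$ is correct but is simply not strong enough to close the argument.

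The paper avoids the sub-hypergraph detour and works directly with $S$: it bounds, for each $S$ of size $\ell$, the probability that at least $2\ell$ of the $O(\ell^2 n)$ three-element variable sets meeting $S$ in two or more points are realised as clause supports, and then union-bounds over $S$. The crucial point is that the exponent $2\ell>\ell$ is exactly what makes $\binom{n}{\ell}\cdot\bigl(C\ell\ln n/n\bigr)^{2\ell}$ summable. (One can quibble whether ``average degree $\ge 5$'' literally forces $2\ell$ such $3$-sets rather than $5\ell/6$; the method certainly yields the statement with some absolute constant in place of $5$, and that is all the application in Proposition~\ref{pro:minima1} needs.)
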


\begin{proof}
(1) This part of the lemma is very similar to Proposition~13 from \cite{Feige06:convergence},
and is proved in a similar way. 
Let $S$ be a fixed set of variables with $|U|=\ell$. The number of 3-element
sets of variables that include 2 variables from $U$ is bounded from above by
$$
\choose{\ell}2 (n-2)\le \frac12 \ell^2n.
$$
For each of them the probability that this set is the set of variables of one of the
random clauses chosen for $\vf$ (we ignore the type of the clause) equals
$$
\frac{\kp n\ln n}{\choose n3}=\frac{6\kp\ln n}{(n-1)(n-2)}.
$$
Thus, the probability that $2\ell$ of them are included as clauses is at most
$$
\choose{\frac12 \ell^2n}{2\ell}\left(\frac{6\kp\ln n}{(n-1)(n-2)}\right)\le
\left(3e\kp\cdot\frac{\ell\ln n}n\right)^{2\ell}.
$$
Let $d=e(3e\kp)^2$. Using the union bound, the probability that there
exists a required set $U$ with at most $n^\al$ variables is at most
\begin{eqnarray*}
\lefteqn{
\sum_{\ell=2}^{n^\al}\choose nk\left(\sqrt{\frac de}\frac{\ell\ln n}n\right)^{2\ell}}\\
&\le& \sum_{\ell=2}^{n^\al}\left(\frac{ne}\ell\cdot
\frac de\cdot\frac{\ell^2\ln^2 n}{n^2}\right)^\ell\\
&\le& \sum_{\ell=2}^{n^\al}\left(d\frac{n^\al\ln^2 n}n\right)^\ell\\
&=& (dn^{\al-1}\ln^2 n)^2\frac{1-(dn^{\al-1}\ln n)^{\ell-1}}{1-dn^{\al-1}\ln n}\\
&=& O(n^{2\al-2}\ln^4 n).
\end{eqnarray*}

(2) The probability that the degree of a fixed vertex is at least $\ln^2n$ is bounded from above by 
$$
\left(\frac1n\right)^{\ln^2n}\choose{3\kp n\ln n}{\ln^2 n}\le
n^{-\ln^2 n}\left(\frac{3e\kp n\ln n}{\ln^2 n}\right)^{\ln^2 n}=
\left(\frac{3e\kp}{\ln n}\right)^{\ln^2 n},
$$
where $n^{-\ln^2n}$ is the probability that some particular $\ln^2n$
random clauses include $x$, and $\choose{3\kp n\ln n}{\ln^2 n}$ is the
number of $\ln^2n$-element sets of clauses. Then it is not hard to see
that
$$
n\left(\frac{3e\kp}{\ln n}\right)^{\ln^2 n}\goes0,
$$
as $n$ goes to infinity.
\end{proof}

Several times we need the following corollary from Azuma's inequality
for supermartingales (see Lemma~1 from \cite{Wormald95:differential}).

\begin{observ}\label{obs:azuma_c}
(1) Let $Y_t$ be a
supermartingale such that $\Expect{Y_{t+1} | Y_{t}} \leq Y_t$ and
$|Y_{t+1}-Y_{t}|<c$ for some $c$. Then 
$
\Prob{Y_t - Y_0 \geq bc} \leq e^{- \frac{b^2}{2t}},
$
for any $b>0$.

(2) This inequality implies that if $\Expect{Y_{t+1}|Y_t} < Y_t - d$ and
$|Y_{t+1}-Y_{t}|<c\leq 1$ then the process $Z_t = Y_t - dt$ is a
supermartingale and we have the following inequality
\begin{equation}
\Prob{Y_t - Y_0 \geq bc} = \Prob{Z_t-Z_0\le \left(b+\frac{dt}c\right)}\le
e^{-\frac{(b + dt)^2}{2tc^2}}
\leq  e^{- bd}.   \label{eq:azuma_c}
\end{equation}
\end{observ}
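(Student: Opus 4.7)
The plan is to establish part (1) by the standard exponential moment method (the Azuma--Hoeffding inequality for supermartingales with bounded differences), and then derive part (2) by applying part (1) to an affinely shifted version of the process.

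For (1), I would work with the increments $D_s := Y_s - Y_{s-1}$, which by hypothesis satisfy $\Expect{D_s \mid Y_{s-1}} \leq 0$ and $|D_s| < c$. Hoeffding's lemma, applied to the conditional distribution of $D_s$ on $(-c,c)$ with nonpositive conditional mean, gives the one-step exponential bound $\Expect{e^{\lambda D_s} \mid Y_{s-1}} \leq e^{\lambda^2 c^2 / 2}$ for every $\lambda > 0$. Iterating this via the tower property yields $\Expect{e^{\lambda (Y_t - Y_0)}} \leq e^{\lambda^2 t c^2 / 2}$, and Markov's inequality then produces $\Prob{Y_t - Y_0 \geq bc} \leq \exp(\lambda^2 t c^2 / 2 - \lambda b c)$. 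Optimizing at $\lambda = b/(tc)$ gives the claimed tail bound $e^{-b^2/(2t)}$.

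For (2), I would introduce the shifted process $Z_t := Y_t + dt$. The drift hypothesis $\Expect{Y_{t+1} \mid Y_t} < Y_t - d$ translates directly into $\Expect{Z_{t+1} \mid Z_t} < Z_t$, so $Z_t$ is a supermartingale, with increments still bounded essentially by $c$ (since $d \leq c$, which is implicit in the joint hypotheses). The event $\{Y_t - Y_0 \geq bc\}$ coincides with $\{Z_t - Z_0 \geq bc + dt\}$, so applying part (1) to $Z_t$ with the enlarged threshold yields
\[
\Prob{Y_t - Y_0 \geq bc} \;\leq\; \exp\!\Bigl(-\tfrac{(bc + dt)^2}{2 t c^2}\Bigr).
\]
Finally, AM--GM gives $(bc + dt)^2 \geq 4 bc \cdot dt$, and combined with the assumption $c \leq 1$ this shows that the exponent is at most $-bd$, producing the desired $e^{-bd}$.

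There is no real obstacle here: part (1) is textbook, and part (2) is essentially a one-line affine shift. The only minor bookkeeping is to verify the increment bound on $Z_t$ after the shift and to convince oneself that the slack between $c$ and $c+d$ is absorbed for free, which is exactly what the AM--GM step above accomplishes.
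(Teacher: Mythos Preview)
Your approach is correct and matches the paper's sketch: part~(1) is the standard Azuma--Hoeffding inequality, and part~(2) follows by the affine shift you describe (you correctly write $Z_t = Y_t + dt$; the paper's $Z_t = Y_t - dt$ is a sign slip, as is the direction of the inequality inside the displayed probability). One refinement worth making explicit: the increments of $Z_t$ lie in the interval $(-c+d,\,c+d)$, so applying part~(1) to $Z_t$ as a black box would force the increment bound $c+d$ rather than $c$, costing a factor of~$2$ in the exponent; the clean way to keep $c^2$ in the denominator is to note that Hoeffding's lemma---which you already invoked in proving part~(1)---depends only on the \emph{width} $2c$ of the increment range, not on its center. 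Contrary to your final sentence, the AM--GM step by itself does not absorb this slack for free.
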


The following lemma is a simple corollary of Chernoff
bound. 

\begin{lemma}\label{lem:vector-chernoff} 
Let $r,s$ be integers, $\theta<1$ a positive real, and let $\alpha_1, \dots, \alpha_r, \beta_1, \dots, \beta_s$ be some real
constants. There are constants $\lambda$ and $C$ such that we have
\begin{equation}
\Prob{X>Y} < C e^{-\lambda \Expect{Y}} \label{eq:vector-chernoff}
\end{equation}
for any random variables $X$ and $Y$ such that $\Expect{X} < \theta \Expect{Y}$ and 
$X = \sum\limits_{i = 0}^r
\alpha_i X_i, Y = \sum\limits_{i = 0}^s \beta_i Y_i$ for some binomial random variables $X_1, \dots, X_r, Y_1, \dots, Y_s$.
\end{lemma}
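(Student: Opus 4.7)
The plan is to prove the lemma by the standard Markov/moment-generating-function argument used for Chernoff bounds, adapted to a linear combination of independent binomials. Throughout, I take the $X_i, Y_j$ to be mutually independent (as they are in the intended applications in this paper) and the coefficients $\alpha_i,\beta_j$ to be nonnegative, which is the setting in which the claimed bound is meaningful.

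For any $t>0$, Markov's inequality gives $\Prob{X>Y}\le \Expect{e^{t(X-Y)}}$. By independence this factors as $\prod_i \Expect{e^{t\alpha_i X_i}}\cdot\prod_j \Expect{e^{-t\beta_j Y_j}}$, and the standard binomial MGF bound $\Expect{e^{sZ}}\le \exp(\Expect{Z}(e^s-1))$ yields
$$
\Prob{X>Y}\le \exp\Bigl(\sum_{i=1}^{r}\mu_i(e^{t\alpha_i}-1)+\sum_{j=1}^{s}\nu_j(e^{-t\beta_j}-1)\Bigr),
$$
where $\mu_i=\Expect{X_i}$ and $\nu_j=\Expect{Y_j}$. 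If $t$ is taken small enough that $|t\alpha_i|\le 1$ and $|t\beta_j|\le 1$ for all $i,j$, the estimate $e^s-1\le s+s^2$ (valid on $|s|\le 1$) bounds the exponent above by $t(\Expect{X}-\Expect{Y})+t^2 K$, where $K=\sum_i\alpha_i^2\mu_i+\sum_j\beta_j^2\nu_j$.

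The key step is to observe that $K=O(\Expect{Y})$. Positivity of the coefficients gives $\alpha_i\mu_i\le \Expect{X}\le \theta\Expect{Y}$ and $\beta_j\nu_j\le \Expect{Y}$, hence $\alpha_i^2\mu_i\le \alpha_i\theta\Expect{Y}$ and $\beta_j^2\nu_j\le \beta_j\Expect{Y}$, so $K\le A\Expect{Y}$ with $A=\theta\sum_i\alpha_i+\sum_j\beta_j$ a constant depending only on the $\alpha_i,\beta_j,\theta$. Combined with the hypothesis $\Expect{X}-\Expect{Y}\le -(1-\theta)\Expect{Y}$, the exponent is at most $\bigl[-(1-\theta)t+At^2\bigr]\Expect{Y}$. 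Choosing $t$ equal to the minimum of $(1-\theta)/(2A)$ and $1/\max_{i,j}(\alpha_i,\beta_j)$, both of which are positive constants, makes the bracket a negative constant $-\lambda$, yielding $\Prob{X>Y}\le e^{-\lambda\Expect{Y}}$.

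The only mild subtlety is the compatibility of the optimal $t$ with the Taylor-regime constraint $|t\alpha_i|,|t\beta_j|\le 1$: if the unconstrained optimum falls outside this window, one simply takes $t$ at the boundary, still producing a linear-in-$\Expect{Y}$ exponent. The regime where $\Expect{Y}$ is bounded by a constant is handled trivially by absorbing everything into the multiplicative constant $C$, taken large enough that $Ce^{-\lambda\Expect{Y}}\ge 1$ whenever $\Expect{Y}$ is below any fixed threshold; so the two constants $\lambda,C$ indeed depend only on $\theta$ and the coefficients, as required.
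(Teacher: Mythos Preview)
Your argument is correct under the added hypotheses of mutual independence and nonnegative coefficients, and it is a genuinely different route from the paper's. The paper does not compute the MGF of $X-Y$; instead it fixes a slack $\xi=\frac{1-\theta}{(r+s)\max_i\alpha_i\vee\max_j\beta_j}$, observes that $X>Y$ forces at least one of the events $\{X_i\ge \Expect{X_i}+\xi\Expect{Y}\}$ or $\{Y_j\le \Expect{Y_j}-\xi\Expect{Y}\}$, applies the one-variable Chernoff bound to each of these $r+s$ events separately, and union-bounds. Your direct exponential-moment computation is cleaner and avoids the extraneous $r+s$ prefactor, but it buys that cleanliness with the mutual-independence assumption, which the lemma as stated does not make: the paper's decomposition only needs each $X_i$ and $Y_j$ to be \emph{marginally} binomial, and this is what actually holds in the applications (Lemmas~\ref{lem:no-zero-flood} and~\ref{lem:close-minima}), where the various clause-type counts are components of a multinomial and hence dependent. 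Your factorization $\prod_i\Expect{e^{t\alpha_iX_i}}\prod_j\Expect{e^{-t\beta_jY_j}}$ is therefore not literally valid there, although the spirit of the MGF argument survives if one factors over clause positions instead of over types. The paper's union-bound approach sidesteps this issue entirely.
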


\begin{proof}
Let $\xi = \frac{1 - \theta}{(r + s) \max(\max(\alpha_i), \max(\beta_i))}$.
It is easy to see that event $X>Y$ implies occurrence of at least one
of the events from the set
$$
\mathcal{S} = \{\{X_i \geq \Expect{X_i} +
\xi \Expect{Y} \}_{i \in \{0, \dots, r\}}, \{ Y_i \leq \Expect{Y_i} -
\xi \Expect{Y} \}_{i \in \{ 0, \dots, s\}}\}.
$$
Indeed, inequality $X<Y$ can be derived from  inequalities, opposite
to the ones in $\mathcal{S}$ and $\Expect{X} < \theta \Expect{Y}$.

Application of Chernoff bound gives us inequalities
\begin{eqnarray*}
\Prob{|X - \Expect{X_i}| > \xi \Expect{Y}} &<& e^{-\Expect{X_i} \xi^2
\left(\frac{\Expect{Y}}{\Expect{X_i}} \right)^2/3} \leq e^{-\xi^2
\Expect{Y} \theta^{-2} / 3},\\
\Prob{|Y - \Expect{Y_i}| > \xi \Expect{Y}} &<& e^{-\Expect{Y_i} \xi^2
\left(\frac{\Expect{Y}}{\Expect{Y_i}} \right)^2/3} \leq e^{-\xi^2
\Expect{Y} / 3}.
\end{eqnarray*}

Thus if we set $\lambda = \xi^2/3$, $C = r + s$ then using union bound
we can conclude that inequality (\ref{eq:vector-chernoff}) holds.
\end{proof}

\section{Success of Local Search}

In this section we prove the first statement of the
Theorem~\ref{the:main}(1). This will be done as follows. First, we
show that if a 3-CNF has high density, that is, greater than $\kappa
\log n$ for some $\kappa>\frac76$ then whp all the local minima that
do not satisfy the CNF --- we call such minima
\emph{proper} --- concentrate very far from the planted
assignment. This is the statement of Proposition~\ref{pro:minima1}
below. Then we use Lemma~\ref{lem:no-zero-flood} to prove that
starting from a random assignment LS whp does not go to that remote
region. Therefore the algorithm does not get stuck to a local minimum
that is not a solution.

Several times we will need the following observation that can be
checked using the inequality\lb $\choose
n\ell\le\left(\frac{ne}\ell\right)^\ell$.
For any $n$, $\gm$, and $\al$ with $0<\al<1$
\begin{equation}\label{equ:estimate}
\choose{n}{\gamma n^\alpha}\le e^{(1-\alpha) \gamma n^{\alpha} \ln n - \gamma
n^\alpha \ln\gamma + \gamma n^\alpha}.
\end{equation}

We need the following two lemmas. Recall that the planted solution is the all-ones one.

\begin{lemma}\label{lem:no-zero-flood}
Let $\vr \ge \kp\ln n$ for some constant $\kp$, and let  constants $q_0, q_1$
be such that $q_0<q_1$. Whp any assignment with $q_0n$ zeros
satisfies more clauses than any assignment with $q_1n$ zeros.
\end{lemma}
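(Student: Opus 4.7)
The plan is to compute, for a fixed assignment $\vu$ with $k$ zeros, the expected number $\Expect{N(\vu)}$ of clauses that $\vu$ falsifies in a random formula from $\Pl$, show that this expectation is strictly monotone in $k$ with a gap of order $\vr n$ between $k=q_0n$ and $k=q_1n$, and then combine a per-assignment Chernoff bound with a union bound over all Boolean $n$-tuples. The hypothesis $\vr\ge\kp\ln n$ is precisely what makes the union bound over $2^n$ assignments close.

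First I would compute the single-clause falsification probability. A sample from $\Pl$ consists of $\vr n$ clauses chosen independently and uniformly from the $7\choose{n}{3}$ allowed clauses (those whose type is not $(-,-,-)$), so it suffices to estimate the probability $p(k)$ that one such random clause is falsified by $\vu$. Fix a $3$-element variable set $S$ and split on the number $a$ of zeros of $\vu$ inside $S$. For any fixed assignment of three variables there is exactly one clause out of the eight possible clauses on $S$ that is falsified, and it is the all-negative clause precisely when $a=0$. Hence exactly one of the seven \emph{allowed} clauses on $S$ is falsified whenever $a\ge 1$ (the $(+,+,+)$ clause if $a=3$, the $(+,+,-)$ clause with the negation on the unique $1$-variable if $a=2$, and the $(+,-,-)$ clause with the positive literal on the unique $0$-variable if $a=1$), while none is falsified when $a=0$. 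Consequently
\[
p(k) \;=\; \frac{\choose{n}{3}-\choose{n-k}{3}}{7\choose{n}{3}},
\]
which for $k=qn$ tends to $\bigl(1-(1-q)^3\bigr)/7$ and is strictly increasing in $q$.

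Since $q_0<q_1$ are constants, the gap $\Delta:=\Expect{N(\vu_1)}-\Expect{N(\vu_0)}=\vr n\bigl(p(q_1n)-p(q_0n)\bigr)$ between the expectations for assignments with $q_1n$ and $q_0n$ zeros is $\Theta(\vr n)=\Theta(n\ln n)$. For any fixed $\vu$, $N(\vu)$ is a sum of $\vr n$ i.i.d.\ Bernoulli$(p(k))$ random variables, so a standard Chernoff estimate gives
\[
\Prob{\bigl|N(\vu)-\Expect{N(\vu)}\bigr|\;\ge\;\Delta/3} \;\le\; 2\exp(-c\vr n)
\]
for a constant $c>0$ depending only on $q_0,q_1$. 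Union-bounding over the at most $2^{n+1}$ assignments of interest and using $\vr n\ge\kp n\ln n\gg n$, the total failure probability is $2^{n+1}\exp(-c\kp n\ln n)=o(1)$.

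On the complementary event, every assignment $\vu_0$ with $q_0n$ zeros satisfies $N(\vu_0)\le\Expect{N(\vu_0)}+\Delta/3<\Expect{N(\vu_1)}-\Delta/3\le N(\vu_1)$ for every assignment $\vu_1$ with $q_1n$ zeros, which is exactly the statement of the lemma. The only slightly delicate step is the clause-level case analysis yielding the clean closed form for $p(k)$; once that is in hand, the linear-in-$\vr n$ gap in expectations easily dominates the Chernoff fluctuations, and the proof closes.
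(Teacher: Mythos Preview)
Your argument is correct and is essentially the paper's own proof: compute the probability that a single random allowed clause is falsified by an assignment with $qn$ zeros, observe the $\Theta(\vr n)$ gap in expectations between the two levels, apply a Chernoff bound to each fixed assignment, and close with a union bound over all $2^n$ assignments using $\vr n\ge\kp n\ln n$. The only difference is cosmetic: the paper splits on the clause type $(+,+,+),(+,+,-),(+,-,-)$, whereas your observation that exactly one of the seven allowed clauses on a triple is falsified precisely when the triple contains at least one zero gives the closed form $p(k)=\bigl(\binom{n}{3}-\binom{n-k}{3}\bigr)\big/\bigl(7\binom{n}{3}\bigr)\to(1-(1-q)^3)/7$ directly.
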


\begin{proof}
Let $\vu,\vv$ be some vectors with $q_0n$ and $q_1n$ zeros, respectively.
Let $c$ be a random clause, then (1) with probability $\frac{1}{7}$ all its
literals are positive, (2) with probability $\frac{3}{7}$ two literals are
positive and similar (3) with probability $\frac{3}{7}$ one literal is positive.
The probabilities that the clause is satisfied by $\vu$ in these cases
are $(1-q_0)^3, (1-q_0)^2q_0$ and $(1-q_0)q_0^2$, respectively. Hence
the total probability of a clause to be satisfied by $\vu$ equals
$\frac{(1-q_0)^3 + 3 (1-q_0)^2q_0 + 3 (1-q_0) q_0^2}{7} = \frac{1 - q_0^3}{7}$.
A similar result holds for $\vv$.
Thus the expectation of the number of clauses satisfied by $\vu$ and $\vv$ in a
random formula equals $\frac{1-q_0^3}{7} \kp n \ln n$ and $\frac{1-q_1^3}{7}
\kp n \ln n$ respectively, thus applying lemma~\ref{lem:vector-chernoff} we
conclude that
$$
\Prob{\mbox{ $\vu$ satisfies less than $\frac{1 - (q_0^3 + q_1^3)/2}{7}
\kp n \ln n$ clauses}} < e^{-\lambda' n \ln n},
$$
for some $\lambda'>0$. 
There are $2^n$ assignments, hence, application of the union bound finishes proof
of the lemma.
\end{proof}

\begin{lemma}\label{lem:close-minima}
Let $\vr\ge \kp\ln n$ for some $\kp$ (not necessarily
$>\frac76$). There is $\al<1$ such that for $\vf\in\Pl$
whp for  any proper local minimum $\vu$ of $\vf$ the number of
variables assigned to~0 by $\vu$ is either less than $n^\al$, or
greater than $\frac {9n}{10}$.
\end{lemma}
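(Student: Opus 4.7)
The plan is a first-moment union bound over all assignments with the forbidden number of zeros. Fix $k$ with $n^\al \le k \le 9n/10$ and an assignment $\vu$ with exactly $k$ zeros, writing $p = k/n$. For each zero-variable $x_i$, let $F_i$ count the clauses voting to flip $x_i$ and $K_i$ those voting to keep it at $0$. Using that a random clause in $\Pl$ containing $x_i$ has its two companion variables uniform in $[n]\setminus\{i\}$ and its sign pattern uniform over the seven permissible ones, a direct enumeration gives
$$\Expect{F_i} = \tfrac{3\vr}{7}, \qquad \Expect{K_i} = \tfrac{3\vr\,p(2-p)}{7}.$$
Setting $F = \sum_{u_i=0}F_i$ and $K = \sum_{u_i=0}K_i$, this yields $\Expect{K}/\Expect{F} = p(2-p) \le 99/100$ whenever $p \le 9/10$. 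A local minimum forces $F_i \le K_i$ for every zero-variable $x_i$, hence in particular the against-the-mean event $F \le K$.

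To estimate $\Prob{F \le K}$ I apply Lemma~\ref{lem:vector-chernoff}. Partition the clauses of $\vf$ into buckets according to the multiset of joint (sign, $\vu$-value) types of the three variables in the triple; there are at most $16$ such buckets (the $20$ multisets of size $3$ over the four states $\{(+,0),(+,1),(-,0),(-,1)\}$, minus the $4$ all-negative ones excluded by the planted distribution). The count in each bucket is a binomial random variable, and the contribution of any clause in a fixed bucket to $F$ (respectively $K$) is a constant in $\{0,1,2,3\}$ depending only on the bucket: $F$ gets $\#(+,0)$-states if all three states lie in $\{(+,0),(-,1)\}$ (equivalently, the clause is unsatisfied by $\vu$) and $0$ otherwise, while $K$ gets $1$ iff exactly one state is $(-,0)$ and the other two lie in $\{(+,0),(-,1)\}$. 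Thus $F$ and $K$ are linear combinations of a bounded number of binomials with $\Expect{K} < \theta \Expect{F}$ for $\theta = 0.995$, and Lemma~\ref{lem:vector-chernoff} gives
$$\Prob{F \le K} \le C\exp\!\bigl(-\tfrac{3\lambda}{7}\vr k\bigr)$$
for absolute constants $C,\lambda > 0$.

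Using $\binom{n}{k} \le e^{k\ln(en/k)}$, $\vr \ge \kp\ln n$, and $\ln(n/k) \le (1-\al)\ln n$ for $k \ge n^\al$, the probability that some $\vu$ with exactly $k$ zeros is a local minimum is at most
$$\exp\!\Bigl(\bigl(1 - \al - \tfrac{3\lambda\kp}{7}\bigr)k\ln n + O(k)\Bigr).$$
Choosing any $\al \in (1 - \tfrac{3\lambda\kp}{7},\,1)$ makes the bracketed coefficient strictly negative, so the above is at most $n^{-\Omega(n^\al)}$, which comfortably absorbs a further union bound over the $O(n)$ admissible values of $k \in [n^\al, 9n/10]$.

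The main technical point is producing a decomposition of $F$ and $K$ into a \emph{bounded} number of binomials: the naive one-binomial-per-zero-variable decomposition has $\Theta(k)$ summands and would destroy the $(r+s)^{-2}$ factor appearing inside the exponent in Lemma~\ref{lem:vector-chernoff}. Bucketing by joint (sign, $\vu$-value) type of the triple keeps the number of summands $O(1)$ at the cost of $\{0,1,2,3\}$-valued coefficients, exactly what the lemma statement accommodates. The cutoff $9n/10$ in the conclusion reflects the breakdown of the strict inequality $p(2-p) < 1$ as $p \uparrow 1$: the expected ``flip'' advantage $3\vr(1-p)^2/7$ vanishes, so the present argument gives no useful concentration once $p$ is too close to $1$, which is precisely why the statement leaves the high-$p$ side to Lemma~\ref{lem:no-zero-flood}.
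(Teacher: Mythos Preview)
Your proof is correct and follows essentially the same route as the paper: replace the per-variable local-minimum condition $F_i\le K_i$ by the aggregated inequality $F\le K$, compute $\Expect{K}/\Expect{F}=p(2-p)\le 99/100$ for $p\le 9/10$, express $F$ and $K$ as bounded linear combinations of binomial bucket-counts, apply Lemma~\ref{lem:vector-chernoff}, and finish with a union bound over $\binom{n}{k}$ choices of the zero-set and over $k$. The paper's bucketing is by clause sign-type and membership of each variable in the zero-set $M$ (yielding slightly fewer buckets and the cruder ratio bound $1-\tfrac{1}{400}$), whereas you bucket by the $(sign,\vu\text{-value})$ multiset; these are equivalent parameterizations of the same argument, and your explicit remark on why a bounded number of summands is essential for Lemma~\ref{lem:vector-chernoff} is a point the paper leaves implicit.
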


\begin{proof}
Let $M, |M|=\ell$ be the set of all variables that $\vu$ assigns to 0.
Let $\mathcal{B}^{each}_M$ be event ``for every $x_i \in M$ the number of clauses
voting for $x_i$ to be 1 is less than or equal to the number of clauses voting
for $x_i$ to be 0''. Since $\vu$ is a local minimum, $\mathcal{B}^{each}_M$ is the case for $\vu$. It is easy to see that event $\mathcal{B}^{each}_M$
implies event $\mathcal{B}^{all}_M =$ ``the total number of votes given by clauses
for variables in $M$ to be 1 is less than or equal to the total number of votes
given by clauses for variables in $M$ to be 0''. To bound the probability of
$\mathcal{B}^{each}_M$ we will bound the probability of $\mathcal{B}^{all}_M$.

Let $c$ be a random clause. It can contribute from 0 to 3 votes for variables in
$M$ to be one and 0 or 1 vote for them to remain zero. Let us compute, for example,
the probability that it contributes exactly two votes for variables in $M$
to become one. It happens if
$c$ is of type $(+,+,-)$, both its positive variables are in $M$ and the negative variable
is outside of $M$. Probability of this event is $\frac{3}{7} \ell^2 n^{-2} (1 - \ell/n)$.
So the expectation of the number of clauses voting for exactly 2 variables in $M$ to be
1 is $\frac{3}{7} \ell^2 n^{-1} (1 - \ell/n) \kappa \ln n$. The expectations of the numbers of
clauses voting for three and one variables to be 1 are $\frac{1}{7} \ell^3 n^{-2} \kappa \ln n$ and $\frac{3}{7} (1-\frac{\ell}{n})^2 \ell \kappa \ln n$, respectively.

A clause votes for a variable in $M$ to remain 0 if its type is $(+,-,-)$, one
of its negative literals is not in $M$, and two other literals are in $M$, or if its type
is $(+,+,-)$ and all the variables in it belong to $M$. Thus the expectation of the
number of clauses voting for variables in $M$ to remain 0 is $\frac{3}{7} \kappa
\ln n \left( 2 \ell^2 n^{-1} (1-\ell/n) +\ell^3 n^{-2}\right)$.

Hence the expectation of the number of votes for variables in $M$ to flip
equals
$$
\Expect{\mbox{votes for a flip}} = \kappa\ln n \times
\left(3\cdot\frac{1}{7} \ell^3 n^{-2} +2\cdot\frac{3}{7} \ell^2 n^{-1} (1 - \ell/n) + 
1\cdot\frac{3}{7} \ell (1 - \ell/n)^2 \right)
$$
and expectation of the
number of votes for variables in $M$ to remain 0 equals
$$
\Expect{\mbox{votes for status quo}} = \kappa \ln n \times\left(
\frac{6}{7} \ell^2n^{-1} (1-\ell/n) + \frac{3}{7} \ell^3 n^{-2}\right).
$$

If $\ell<\frac9{10}n$ then 
\begin{eqnarray*}
\frac{\Expect{\mbox{votes for status quo}}}{\Expect{\mbox{votes for a flip}}} &=&
\frac{6\ell(n-\ell)+3\ell^2}{6\ell(n-\ell)+3\ell^2+3(n-\ell)^2}=
1-\frac{3(n-\ell)^2}{6\ell(n-\ell)+3\ell^2+3(n-\ell)^2}\\
&<& 1-\frac{3\cdot\frac1{100}n^2}{12n^2}=1-\frac1{400}.
\end{eqnarray*}
Therefore we can apply Lemma~\ref{lem:vector-chernoff} to the votes for and
against 0s and get the following bound
$\Prob{\mathcal{B}^{all}_M} < e^{- \lambda \Expect{\mbox{votes for a flip}}}$ for some $\lambda>0$.
Then we can bound number of votes for a flip from below by $\delta \ell
\ln n$ for some constant $\delta$ and we can bound the number of sets $M$
of size $\ell$ as
$$
\# (\mbox{M of size $\ell$}) = \choose{n}{\ell} \leq
\left(\frac{ne}{\ell}\right)^\ell = e^{\ell \ln(n/\ell) + \ell}.
$$

Therefore if
$$
\ell \ln (n/\ell) + \ell < \delta \ell \ln n 
$$
then union bound implies that whp there is no set $M$ such that
$\mathcal{B}^{all}_M$ happens. It is easy to see that for $\ell>n^\alpha$
and $\alpha$ that is close enough to 1 the above inequality 
holds, which finishes the proof of the lemma.
\end{proof}

Now suppose that $\vu$ is a proper local minimum of $\vf\in\Pl$.
There is a clause $c\in\vf$ that is not satisfied by $\vu$. Without loss of generality,
let the variables in $c$ be $x_1,x_2,x_3$, and let the variable assigned 0 be $x_1$.
Thus, clause $c$ votes for $x_1$ to be flipped to 1. Since $\vu$ is a local minimum
there must a clause that is satisfied, that becomes unsatisfied should $x_1$ flipped.
We call such a clause a \emph{support} clause for the 0 value of $x_1$. In any support
clause the supported variable is negated, and therefore any support clause has the type
$(+,-,-)$ or $(+,+,-)$.
A variable of a CNF is called \emph{$k$-isolated} if it appears positively in at most
$k$ clauses of the type $(+,-,-)$. The \emph{distance}
between variables of a CNF $\vf$ is the length of the shortest path in $G(\vf)$
connecting them.

\begin{lemma}\label{lem:scarcity}
If $\kappa>\frac{7}{6}$ and $\vr\ge\kappa\ln n$ then for any integers
$d_1,d_2\ge1$ and for a random $\vf\in\Pl$ whp there are no two
$d_1$-isolated variables within distance $d_2$ from each other.
\end{lemma}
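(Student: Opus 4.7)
The plan is to prove the statement by a first-moment argument: I would show that the expected number of pairs of $d_1$-isolated variables lying at distance at most $d_2$ in $G(\vf)$ tends to $0$, and then apply Markov's inequality. The crucial structural input is that a clause of type $(+,-,-)$ has a \emph{unique} positive literal, which makes the isolation events for two distinct variables essentially independent.

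First I would bound the probability that a fixed variable $x_i$ is $d_1$-isolated. A random clause of $\vf$ has type $(+,-,-)$ with probability $\frac{3}{7}$, and conditional on this type contains $x_i$ as its unique positive literal with probability $\frac{1}{n}$; so the number $X_i$ of such clauses is binomial with parameters $(\vr n, \frac{3}{7n})$ and mean $\mu = \frac{3\vr}{7} \ge \frac{3\kp\ln n}{7}$. Summing the first $d_1{+}1$ binomial terms (the $k=d_1$ term dominates) gives
\[
\Prob{X_i \le d_1}\ \le\ O\!\left((\ln n)^{d_1}\right)\cdot n^{-3\kp/7}.
\]
For distinct $x_i,x_j$, since each $(+,-,-)$ clause has a unique positive literal, the contributions of a single sampled clause to $X_i$ and to $X_j$ are mutually exclusive; hence $X_i+X_j$ is itself binomial with mean $\ge \frac{6\kp\ln n}{7}$, and the same estimate yields
\[
\Prob{X_i\le d_1,\ X_j\le d_1}\ \le\ \Prob{X_i+X_j\le 2d_1}\ \le\ O\!\left((\ln n)^{2d_1}\right)\cdot n^{-6\kp/7}.
\]
Lemma~\ref{lem:random}(2) lets us restrict attention to the event that every vertex of $G(\vf)$ has degree at most $\ln^2 n$, so each ball of radius $d_2$ contains at most $(\ln^2 n)^{d_2}$ vertices and there are at most $n\,(\ln^2 n)^{d_2}$ pairs at distance $\le d_2$. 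A union bound then makes the expected number of bad pairs at most $n^{1-6\kp/7}\cdot\mathrm{polylog}(n)$, which is $o(1)$ as soon as $\kp>\frac76$.

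The main obstacle is that the events ``$x_i,x_j$ lie at distance $\le d_2$'' and ``both are $d_1$-isolated'' are not independent: realizing a short path between $x_i$ and $x_j$ forces certain connecting clauses to appear in $\vf$, and these might coincide with clauses that would contribute to $X_i$ or $X_j$. This correlation is benign, however, because any additional clause can only \emph{increase} the counts $X_i,X_j$ and therefore only \emph{decrease} the conditional probability of isolation. Formally one either conditions on the $\le d_2$ connecting clauses and reapplies the binomial estimate to the remaining $\vr n - O(d_2)$ independent trials, or one bypasses Lemma~\ref{lem:random}(2) by summing directly over all candidate paths $x_i=y_0,\ldots,y_\ell=x_j$ with $\ell\le d_2$, bounding each path's contribution by (probability that its $\ell$ connecting clauses exist) $\times$ (joint isolation probability on the remaining clauses). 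In either case the per-pair estimate $n^{-6\kp/7+o(1)}$ survives, and the hypothesis $\kp>\frac76$ makes the total expectation $o(1)$.
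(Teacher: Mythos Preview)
Your proposal is correct and follows the same first-moment/union-bound strategy as the paper: bound the single-variable isolation probability by $n^{-3\kp/7+o(1)}$, use the degree bound of Lemma~\ref{lem:random}(2) to control the size of radius-$d_2$ balls, and conclude that the expected number of bad pairs is $n^{1-6\kp/7+o(1)}=o(1)$ once $\kp>\tfrac76$. In fact your argument is more careful than the paper's own proof, which simply multiplies the two isolation probabilities and does not discuss the correlation with the distance event; your observation that $X_i+X_j$ is itself binomial (since a $(+,-,-)$ clause has a unique positive literal) and your treatment of the connecting clauses supply exactly the justification the paper omits.
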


\begin{proof}
Let $x$ be some variable. The probability that it is
$d_1$-isolated can be computed as
\begin{eqnarray*}
\Prob{x \mbox{ is $d_1$-isolated}} &=&
d_1\cdot\choose{\kp n\ln n}{d_1} \left( 1- \frac{3}{7n} \right)^{\kappa n\ln n -d_1} \left(\frac3{7n}\right)^{d_1}\\
&\le& d_1(\kp n\ln n)^{d_1}\left( 1- \frac{3}{7n} \right)^{\kappa n\ln n} \left( 1- \frac{3}{7n} \right)^{-d_1}\left(\frac73 n\right)^{-d_1}\\
&\sim& d_1 \left( 1- \frac{3}{7n} \right)^{-d_1} (\frac{7\kp}3\ln n)^{d_1} e^{-\frac{3}{7} \kappa \ln n}\\
&=& O(n^{-\frac{3\kappa}{7}+\varepsilon}), 
\end{eqnarray*}
for any $\epsilon>0$.

By Lemma~\ref{lem:random}(2), the degree of every vertex of $G(\vf)$
whp does not
exceed $\ln^2n$. Hence, there are at most $\ln^{2d_2}n$ vertices at distance
$d_2$ from $x$. Applying the union bound we can estimate the probability
that there is a $d_1$-isolated vertex at distance $d_2$ from $x$ as
$O(\ln^{2d_2}n\cdot n^{-\frac37\kp})$. Finally, taking into account the
probability that $x$ itself is $d_1$-isolated, and applying  the union bound
over all vertices of $G(\vf)$ we obtain that the probability that two
$d_1$-isolated vertices exists at distance $d_2$ from each other can be bounded
from above by
$$
n\cdot O(n^{-\frac{3\kp}7})\cdot O(\ln^{2d_2}n\cdot n^{-\frac37\kp})=
O(\ln^{2d_2}n\cdot n^{1-\frac67\kp}).
$$
Thus for $\kappa>\frac{7}{6}$ whp there are no two such vertices.
\end{proof}

\begin{proposition}\label{pro:minima1}
Let $\vr\ge \kp\cdot\ln n$, and $\kp>\frac76$. Then whp proper
  local minima of a 3-CNF
from $\Pl$ have at most $\frac n{10}$ ones.
\end{proposition}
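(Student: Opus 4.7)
The plan is to combine Lemma~\ref{lem:close-minima} with Lemma~\ref{lem:scarcity} via a contradiction argument. By Lemma~\ref{lem:close-minima}, whp every proper local minimum $\vu$ of a random $\vf\in\Pl$ has zero-set $M$ of size either below $n^\alpha$ or above $\frac{9n}{10}$; the second alternative is exactly the desired conclusion ($\le n/10$ ones), so it suffices to rule out the first whp. Assume therefore that $\vu$ is a proper local minimum with $|M|<n^\alpha$ and pick an unsatisfied clause $c\in\vf$. Since $\Pl$ contains no $(-,-,-)$ clauses, the variable of every positive literal of $c$ lies in $M$; choose such a variable $x_1\in M$. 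Then $c$ votes for $x_1$ to flip, and the local-minimum property forces at least one support clause at $x_1$; by the classification preceding Lemma~\ref{lem:scarcity} this support clause is of type $(+,-,-)$ or $(+,+,-)$, contains $\neg x_1$, and has every other literal false. Either shape produces a second variable $y\in M$ adjacent to $x_1$ in $G(\vf)$.

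The core step is to show that both $x_1$ and $y$ are $d_1$-isolated for a suitable constant $d_1$, which contradicts Lemma~\ref{lem:scarcity} with $d_2=1$. To argue that $x_1$ is $d_1$-isolated: otherwise $x_1$ appears positively in more than $d_1$ clauses of type $(+,-,-)$, and by Lemma~\ref{lem:random}(2) together with $|M|<n^\alpha$ only $O(\log^2 n)$ of these clauses can contain a second variable from $M$. Hence at least $d_1-O(\log^2 n)$ of them are unsatisfied and each contributes a flip vote to $x_1$. On the other hand, any support clause for $x_1$ must contain $\neg x_1$ together with an extra variable in $M$; a first-moment calculation combined with a union bound over all subsets $M$ of size at most $n^\alpha$ shows that the number of such candidate support clauses is $o(d_1)$ whp. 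For a large enough constant $d_1$ this contradicts the local-minimum condition at $x_1$, forcing $x_1$ to be $d_1$-isolated. The analogous argument applied at $y$ — most directly when $c$ has type $(+,+,+)$ or $(+,+,-)$, where $y$ may be taken as another positive variable of $c$ and therefore also receives a flip vote from $c$; and with an additional support-chasing step when $c$ has type $(+,-,-)$ — produces the required second $d_1$-isolated vertex at distance $1$ from $x_1$.

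The hard part will be the $(+,-,-)$ branch of the case analysis: there $c$ contributes only one variable to $M$, so the second $d_1$-isolated vertex has to be mined from the support-clause structure rather than directly from an unsatisfied clause at $y$, and one must exploit the local-minimum condition at $y$ together with the sparsity of $M$ to close the argument. The remaining ingredients — controlling the number of candidate support clauses, performing the union bound over $M$ using the estimate in~(\ref{equ:estimate}), and invoking the degree bound from Lemma~\ref{lem:random}(2) — follow the same pattern already used in the proofs of Lemmas~\ref{lem:close-minima} and~\ref{lem:scarcity}.
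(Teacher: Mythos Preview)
Your overall skeleton (reduce via Lemma~\ref{lem:close-minima} to ruling out $|M|<n^\al$, then locate $x_1\in M$ and an adjacent $y\in M$ via an unsatisfied clause and a support clause) matches the paper. But the core step --- proving that both $x_1$ and $y$ are $d_1$-isolated --- does not go through.

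First, the quantitative argument you give for ``$x_1$ is $d_1$-isolated'' is vacuous. Lemma~\ref{lem:random}(2) bounds the total degree of $x_1$, hence the total number of clauses containing $x_1$, by $O(\ln^2 n)$. So for constant $d_1$ the expression ``at least $d_1-O(\ln^2 n)$ of them are unsatisfied'' is negative; you have no lower bound on flip votes this way. Second, the proposed first-moment-plus-union-bound on support clauses cannot yield a \emph{constant} bound uniformly over all $M$: there are $\exp\bigl(\Theta(n^\al\ln n)\bigr)$ candidate sets $M$, while for a fixed $M$ the probability that $x_1$ has at least $C$ support clauses is only polynomially small in $n$ (of order $(n^{\al-1}\ln n)^C$). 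Balancing forces $C\gtrsim n^\al$, not a constant. Third, and more basically, there is no reason a local minimum should avoid non-isolated variables; nothing you have forbids $x_1$ (or $y$) from being non-isolated, so the attempt to force both to be isolated and invoke Lemma~\ref{lem:scarcity} directly is headed the wrong way.

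The paper uses Lemma~\ref{lem:scarcity} in the \emph{opposite} direction and then brings in Lemma~\ref{lem:random}(1), which your plan never uses. From the pair $x_1,y$ one concludes that at least one of them is \emph{not} $11$-isolated, so the set $Z\subseteq M$ of non-$11$-isolated variables is nonempty. For each $x\in Z$, the $\ge 10$ clauses of type $(+,-,-)$ in which $x$ occurs positively, combined with the support clauses balancing the unsatisfied ones, produce at least $6$ neighbours of $x$ in $M$; since any two neighbours of $x$ are at distance $\le 2$, Lemma~\ref{lem:scarcity} (with $d_2=3$) forces all but at most one of them to be non-$11$-isolated, hence at least $5$ lie in $Z$. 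Thus the subgraph of $G(\vf)$ induced on $Z$ has minimum (hence average) degree $\ge 5$, contradicting Lemma~\ref{lem:random}(1) because $|Z|\le|M|<n^\al$. That density argument is the missing idea in your proposal.
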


\begin{proof}
Let $\vf\in\Pl$ be a random planted instance. Suppose that $\vu$
is a proper local minimum that has more than $\frac n{10}$ ones. We use the following observation. Let $c$ be a clause
not satisfied by $\vu$. Then it contains at least one variable $x_i$ that is
assigned to zero by $\vu$. The assignment $\vu$ is a local minimum, so there must
be a clause $c'$ that is satisfied only by $x_i$. Hence, $c'$ is a support clause,
and contains a variable $x_j$ which is assigned to zero by $\vu$. Variables
$x_i$ and $x_j$ are at distance~$1$. Setting $d_1=11$ and $d_2=1$, by
Lemma~\ref{lem:scarcity}, we conclude that one of them is not 11-isolated.

Set $d_1 = 11$, $d_2 = 3$ and consider the set $Z$ of all variables assigned
to zero by $\vu$ that are not 11-isolated. By the observation above this set
is non-empty. On the other hand, by Lemma~\ref{lem:close-minima}, $|Z|$ is
$O(n^\al)$ for some $\al<1$. Consider $x\in Z$. It appears positively in at least
10 clauses of the type $(+,-,-)$. Each of these clauses is either unsatisfied or
contains a variable assigned to 0. Suppose there are $k$ unsatisfied clauses among them. Since
$\vu$ is a local minimum, to prevent $x$ from flipping, $x$ must be supported by at
least $k$ support clauses, each of which contains a variable assigned to~0. Thus, at
least 6 neighbors of $x$ in $G(\vf)$ are assigned to~0. Any two neighbors of $x$ are at
distance 2. By Lemma~\ref{lem:scarcity} at least 5 of
the neighbors assigned to~0 are not 11-isolated, and therefore belong to $Z$. Thus
the subgraph induced by $Z$ in $G(\vf)$ has the average degree greater than 5,
which is not possible by Lemma~\ref{lem:random}(1).
\end{proof}

Now we are in a position to prove statement (1) of
Theorem~\ref{the:main}.

\begin{proof}[of Theorem~\ref{the:main}(1)]
By Lemma~\ref{lem:no-zero-flood}
for a $\vf\in\Pl$ whp any assignment with $dn$ variables equal to~1, where
$\frac13\le d\le\frac23$, satisfies more clauses than any assignment with
$\frac n{10}$ equal to~1. Then, whp a random initial assignment for LS assigns
between $\frac13$ and $\frac23$ of all variables to~1. Therefore, whp LS never arrives
to a proper local minimum with less than $\frac n{10}$ variables equal to~1, and,
by Proposition~\ref{pro:minima1}, to any proper local minimum.
\end{proof}

\section{Failure of Local Search}

We now prove statement (2) of Theorem~\ref{the:main}. The overall
strategy is the following. First, we show,
Proposition~\ref{pro:minima2}, that in contrast to the
previous case there are many proper local minima in the close proximity of
the planted assignment. Then we show, Proposition~\ref{pro:failing},
that those local minima are located so that they intercept almost
every run of LS, and thus almost every run is unsuccessful.

We start off with a technical lemma. A pair of clauses $c_1=(x_1,\ov x_2,\ov x_3)$,
$c_2=(\ov x_1,\ov x_4,x_5)$ is called a \emph{cap}
if $x_1,x_5$ are 1-isolated, that is they do not appear in any clause of the type $(+,-,-)$ except for $c_1$ and $c_2$, respectively, and $x_2,x_3$ are not 0-isolated
(see Figure~\ref{fig:cap}(a)). We denote equality $f(n) = g(n) (1 + o(n))$ by $f(n) \sim g(n)$.

\begin{figure}[t]
\centerline{\includegraphics[totalheight=1.5cm,keepaspectratio]{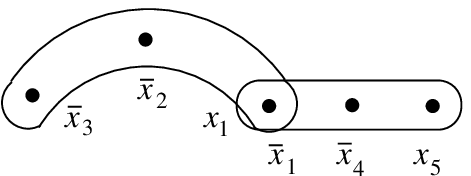}
\hspace{3cm}
\includegraphics[totalheight=2.8cm,keepaspectratio]{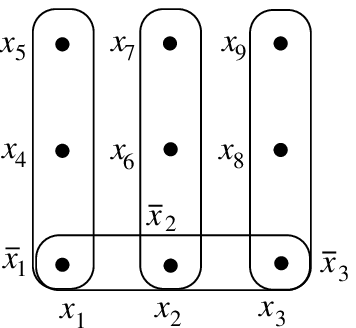}}
\caption{Caps and crowns}
\label{fig:cap}
\end{figure}

\begin{lemma}\label{lem:close-isolated}
Let $n^{-\frac14}<\vr\le \kp\cdot\ln n$, and $\kp<\frac76$. There is $\al$, $0<\al<1$, such that whp a random planted CNF $\vf\in\Pl$ contains at least $n^\al$ caps. 
\end{lemma}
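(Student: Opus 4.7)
I would prove this via a second-moment calculation on the number of caps.

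For each ordered 5-tuple $v=(x_1,x_2,x_3,x_4,x_5)$ of distinct variables, let $X_v$ be the indicator that $c_1(v)=(x_1,\bar x_2,\bar x_3)$ and $c_2(v)=(\bar x_1,\bar x_4,x_5)$ both appear in $\varphi$, that $x_1$ and $x_5$ are 1-isolated, and that $x_2,x_3$ are each not 0-isolated. Set $T=\sum_v X_v$; the cap count equals $T$ up to a bounded constant symmetry factor. Since $\varphi\in\Pl$ consists of $m=\varrho n$ i.i.d.\ uniform draws from the $7\binom{n}{3}$ valid clauses, a specified clause appears in $\varphi$ with probability $\sim mp=\tfrac{6\varrho}{7n^2}$ where $p=1/(7\binom{n}{3})$, and inclusion--exclusion gives $\Prob{c_1(v),c_2(v)\in\varphi}\sim(mp)^2$. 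Conditioning on this, the remaining draws remain essentially i.i.d., and the probability that a given variable is positive in none of the $\binom{n-1}{2}$ $(+,-,-)$ clauses with it positive is $(1-\tfrac{3}{7n})^m\sim e^{-3\varrho/7}$. The four isolation events involve four distinct ``positive-variable'' categories of $(+,-,-)$ clauses which are disjoint, so they factor asymptotically, and
\[
\Expect{T}\;\sim\;\tfrac{36}{49}\,\varrho^2\,n\cdot e^{-6\varrho/7}\bigl(1-e^{-3\varrho/7}\bigr)^2.
\]
A case check shows this is at least $n^{2\alpha}$ for some fixed $\alpha\in(0,1)$: in the regime $\varrho=\Theta(\ln n)$, $e^{-6\varrho/7}=n^{-6\kappa/7}$ with $6\kappa/7<1$ keeps the bound polynomially large; in the regime $\varrho=o(1)$, $(1-e^{-3\varrho/7})^2\sim(3\varrho/7)^2$ and $\Expect{T}\asymp n\varrho^4$, which is polynomial provided $\varrho$ exceeds $n^{-1/4}$ by a polynomial factor.

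For concentration I would use Chebyshev, after showing $\Expect{T^2}=(1+o(1))\Expect{T}^2$. Partition pairs $(v,w)$ in $\Expect{T^2}=\sum\Expect{X_vX_w}$ by the number $k$ of shared variables. The dominant $n^{10}(1+o(1))$ disjoint pairs ($k=0$) contribute $(1+o(1))\Expect{T}^2$ because the four cap-clauses are distinct and the eight isolation events lie on disjoint clause-categories. For $k\ge 1$ the number of pairs shrinks by $n^{-k}$, and many overlap patterns force $X_vX_w=0$: for instance if $x_1(v)=x_1(w)$ then both $c_1(v)$ and $c_1(w)$ are distinct $(+,-,-)$ clauses in which $x_1(v)$ appears positively, violating 1-isolation. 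The surviving overlap patterns inflate the joint probability by at most $n^k$ (from saving at most one cap-clause or one witness per shared variable), so their total contribution is $o(\Expect{T}^2)$. Chebyshev then yields $T\ge\tfrac12\Expect{T}\ge n^\alpha$ whp.

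The main obstacle is the variance bookkeeping, particularly because the ``not 0-isolated'' events are positive-correlation events that do not immediately factor across disjoint variable sets. I would handle this by conditioning on a specific witness clause for each of $x_2$ and $x_3$ --- equivalently, by inclusion--exclusion on $1-e^{-3\varrho/7}=\sum_{j\ge 1}\tfrac{(-1)^{j-1}}{j!}(3\varrho/7)^j$ --- reducing every cap event to a conjunction of ``specific clauses present / absent'' statements whose probabilities multiply cleanly across disjoint clause-supports. The delicate regime is $\varrho$ just above $n^{-1/4}$, where $\Expect{T}$ is only weakly polynomial and one must verify that no overlap configuration produces a gain large enough to dominate the expectation.
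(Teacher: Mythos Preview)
Your approach is correct and is essentially the same as the paper's: both use the second moment method on the cap count. The paper organizes the variance calculation via the Alon--Spencer conditional quantity $\Delta^*=\sum_{T\asymp S}\Prob{A_T\mid A_S}$ and simply lists the handful of overlap patterns (those in which one of the cap clauses coincides) that are not automatically killed by the $1$-isolation constraint, obtaining $\Delta^*=O(\varrho^4 n^{-\frac37\varrho/\ln n})=o(\Expect X)$; you do the equivalent direct Chebyshev expansion partitioned by $|v\cap w|$. Your treatment of the ``not $0$-isolated'' events via witness clauses is more explicit than the paper's, which just appeals to the analogy with strictly balanced subgraph counts, and your expectation formula $\asymp \varrho^2 n\,e^{-6\varrho/7}(1-e^{-3\varrho/7})^2$ is in fact more accurate than the paper's stated $\varrho^4 n^{1-\frac67\varrho/\ln n}$ (which tacitly replaces $(1-e^{-3\varrho/7})$ by $\Theta(\varrho)$); both agree in the regimes that matter.
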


\begin{proof}
The proof is fairly standard, see, e.g.\ the proof of Theorem~4.4.4 in
\cite{Alon00:probabilistic}. We use the second moment method. The result follows from the fact that a cap has properties
similar to the properties of \emph{strictly balanced graphs}, see
\cite{Alon00:probabilistic}. Take some $n$, and let $X$ be a random variable
equal to the number of caps in a 3-CNF $\vf\in\Pl$. Straightforward calculation
shows that the probability that a fixed 5-tuple of variables is a cap is
 $\sim \vr^4 n^{-4-\frac67 \frac\varrho{\ln n}}$. Therefore  $\Expect X\sim \vr^4 n^{1-\frac67 \frac\varrho{\ln n}}$.

Let $S$ be a fixed 5-tuple of variables, say, $S=(x_1,x_2,x_3,x_4,x_5)$, and $A_S$
denote the event that $S$ forms a cap. For any other 5-tuple $T$, the similar event
is denoted by $A_T$, and we write $A_T\asymp A_S$ if these two events are not independent. By Corollary~4.3.5 of \cite{Alon00:probabilistic} it suffices to show that
$$
\Delta^*=\sum_{T\asymp S}\Prob{A_T\mid A_S}=o(\Expect X).
$$

Let $T=(y_1,y_2,y_3,y_4,y_5)$. It is not hard to see that the only cases when
$A_T$ and $A_S$ are not independent and the probability $\Prob{A_T\mid A_S}$ is significantly different from 0 is: $y_1=x_1$ and $\{y_2,y_3\}=\{x_2,x_3\}$,
or $y_1=x_5$ and
$\{y_2,y_3\}=\{x_1,x_4\}$, or $y_5=x_1$ and $\{y_1,y_4\}=\{x_2,x_3\}$, or
$y_5=x_5$ and $\{y_1,y_4\}=\{x_1,x_4\}$.
Then, as before,
it can be found that in each of these cases $\Prob{A_T\mid
A_S}=O(\vr^4 n^{-2-\frac37 \frac\varrho{\ln n}})$.

Finally, 
\begin{eqnarray*}
\Delta^* &=& \sum_{T\asymp S}\Prob{A_T\mid A_S}=n^2\Prob{A_T\mid A_S}=
n^2\cdot O(\vr^4 n^{-2-\frac37 \frac\varrho{\ln n}})\\
&=& O(\vr^4 n^{-\frac37 \frac\varrho{\ln n}})=o(\Expect X).
\end{eqnarray*}
We can choose $\al=1-\frac67\kp$ if $\vr\ge 1$, and $\al=1-4\nu$ if $1>\vr>n^{-\nu}$ for $\nu<\frac14$.
\end{proof}

\begin{proposition}\label{pro:minima2}
Let $\vr\le \kp\cdot\ln n$, and $\kp<\frac76$. Then there is $\al$, $0<\al\le1$,
such that a 3-CNF from $\Pl$ whp has at least $n^\al$
proper local minima.
\end{proposition}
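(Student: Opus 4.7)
The plan is to convert each cap guaranteed by Lemma~\ref{lem:close-isolated} into a distinct proper local minimum of $\vf$, yielding at least $n^\al$ such minima. For a cap with clauses $c_1=(x_1,\ov x_2,\ov x_3)$ and $c_2=(\ov x_1,\ov x_4,x_5)$, let $\vu$ be the assignment that differs from the all-ones planted assignment only in that $x_1$ and $x_5$ are set to $0$. Under $\vu$ the clause $c_1$ is violated, so $\vu$ is non-satisfying; the remaining task is to check that $\vu$ is a local minimum.

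I would verify this by a variable-by-variable inspection of $U$ under $\vu$. Flipping $x_1$ back to $1$ satisfies $c_1$ (gain $+1$) but violates $c_2$, which under $\vu$ is supported only by $\ov x_1=1$ (loss $-1$), for net change $0$. Flipping $x_5$ back to $1$ affects neither $c_1$ (which does not contain $x_5$) nor $c_2$ (which remains satisfied by $\ov x_1$), so the net change is again $0$. Flipping $x_2$ or $x_3$ to $0$ satisfies $c_1$ (gain $+1$), but since neither is $0$-isolated each appears positively in some other $(+,-,-)$ clause whose other two literals are negations of variables still at $1$, producing loss at least $1$. For $x_4$ and for every other variable, flipping to $0$ gains $0$ (since $c_1$ is the unique violated clause and does not contain their negation), hence they are not in $U$.

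The above reasoning relies on a mild \emph{isolation} condition on the cap: no clause of the form $(x_1,x_5,\ov y)$ is present (such a clause would be violated under $\vu$ and could push $y$ into $U$), and the supporting $(+,-,-)$ clauses of $x_2$ and $x_3$ do not contain $\ov x_1$ or $\ov x_5$ (otherwise those supports remain satisfied after the flip, eliminating the loss). A routine first-moment bound, using $\vr\le\kp\ln n$, shows each cap fails isolation with probability $O(\ln^3 n/n)$, so whp at most $o(n^\al)$ caps are non-isolated and the remaining $(1-o(1))n^\al$ caps each yield a local minimum. Distinctness of the resulting minima follows because a $1$-isolated variable uniquely determines the $(+,-,-)$ clause it belongs to, so each cap is uniquely determined by its pair of $1$-isolated variables and different caps flip different variable pairs relative to the planted assignment.

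The main obstacle is organizing the probabilistic bookkeeping: the conditional distribution of the formula given the presence of a specific cap is biased, and one must check that the first-moment estimates for the bad isolation events still hold after this conditioning. Once that is set up correctly, the remainder is a combination of the cap-counting estimate of Lemma~\ref{lem:close-isolated}, the degree bound of Lemma~\ref{lem:random}(2) (to control cascade effects near the flipped variables), and standard union-bound and Chernoff-type estimates such as Lemma~\ref{lem:vector-chernoff}; the combinatorics is routine but must be done with care.
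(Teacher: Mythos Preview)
Your approach is essentially the paper's: turn each cap into a proper local minimum by flipping its two $1$-isolated variables to~$0$, then count via Lemma~\ref{lem:close-isolated}. The paper's proof is much terser---it writes $u_3=u_5=0$, almost certainly a typo for your $u_1=u_5=0$, and dismisses the local-minimum check as ``straightforward''---so your explicit isolation conditions on $(x_1,x_5,\ov y)$ clauses and on the supports of $x_2,x_3$, together with the distinctness argument, fill in exactly the details the paper glosses over.
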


\begin{proof}
Let $c_1=(x_1,\ov x_2,\ov x_3)$, $c_2=(\ov x_1,\ov x_4,x_5)$ be a cap and $\vu$
an assignment such that $u_3=u_5=0$, and $u_i=1$ for all other $i$. It
is straightforward
that $\vu$ is a proper local minimum. By
Lemma~\ref{lem:close-isolated}, there is $\al$
such that whp the number of such minima is at least $n^\al$.
\end{proof}

Before proving Proposition~\ref{pro:failing}, we note that a construction
similar to caps helps evaluate the approximation rate of the local
search in the case of
constant density on planted and also on arbitrary CNFs. A subformula
$c=(x_1,x_2,x_3), c_1=(\ov x_1,x_4,x_5), c_2=(\ov x_2,x_6,x_7),
c_3=(\ov x_3, x_8,x_9)$ is called a \emph{crown} if the variables
$x_1\zd x_9$ do not appear in any clauses other than
$c,c_1,c_2,c_3$ (see Fig.~\ref{fig:cap}(b)). The crown is satisfiable,
but the all-zero assignment is a proper local minimum. For a CNF $\vf$ and an assignment $\vu$ to its variables, by $\OPT(\vf)$ and $\sat(\vu)$ we denote the maximal number of simultaneously satisfiable clauses and the number of clauses satisfied by $\vu$, respectively. 

\begin{theorem}\label{the:finite-density} 
If density $\vr$ is such that $n^{-\nu} \leq \vr \leq \kp \ln n$ for some $\nu<1/4$ and  $\kp<1/27$, then there is $\gm_\vr=\frac1{o(n)}$ such that whp Local Search on a
3-CNF $\vf\in\Phi(n,\vr n)$ ($\vf\in\Pl$) returns an assignment $\vu$ such that
$\OPT(\vf)-\sat(\vu)\ge \gm(\vr)\cdot n$, where $\OPT(\vf)$ denotes
the maximal number of clauses in $\vf$ that can be simultaneously satisfied and $\sat(\vu)$ denotes the number of clauses satisfied by $\vu$.

If $\vr$ is constant then $\gm_\vr$ is also constant.
\end{theorem}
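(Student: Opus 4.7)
The plan is to parallel the role played by caps in Proposition~\ref{pro:minima2}, using crowns instead. First, via the second moment method I would show that $\vf$ (from either $\Phi(n,\vr n)$ or $\Pl$) whp contains many crowns. Let $X$ be the number of crowns in $\vf$. A fixed ordered 9-tuple of variables forms a crown exactly when its four prescribed clauses are present in $\vf$ and none of the remaining $\vr n-4$ clauses touches any of its nine variables; this event has probability $\Theta\!\left((\vr/n^2)^4\, e^{-27\vr}\right)$, giving $\Expect{X}=\Theta(\vr^4 n\, e^{-27\vr})$. In the logarithmic regime $\vr=\kp\ln n$ this equals $\Theta(\ln^4 n\cdot n^{1-27\kp})$, which tends to infinity since $\kp<\tfrac1{27}$; in the sub-logarithmic regime $\vr\ge n^{-\nu}$ it is at least $n^{1-4\nu}$, which tends to infinity since $\nu<\tfrac14$. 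The second moment computation follows the same pattern as Lemma~\ref{lem:close-isolated}: two distinct 9-tuples can be correlated only through a finite list of overlap patterns (sharing one or two variables in compatible positions), and in each pattern the gain from sharing a clause is dominated by the loss of a factor of $n$ in the variable count, so the variance is $o((\Expect{X})^2)$ and Chebyshev's inequality gives $X\ge\tfrac12\Expect{X}$ whp.

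Next I would exploit the isolation of crown variables. The nine variables of a crown appear only in its four clauses, so a direct vote count shows that the all-zero configuration on a crown is stable under LS: for $i\in\{1,2,3\}$ the clause $c$ casts one vote to flip $x_i$ while $c_i$ casts one vote to keep it (net zero, so $x_i\notin U$), and $x_4,\dots,x_9$ each belong only to an already-satisfied clause and receive no vote at all. Since LS activity outside a crown cannot affect the state inside it, any crown whose variables start at $0$ remains so until LS halts. Because the initial assignment of LS is uniformly random and distinct crowns are variable-disjoint, each crown starts all-zero independently with probability $2^{-9}$; Chernoff's inequality then yields at least $2^{-10}X$ frozen crowns whp.

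Finally, each frozen crown leaves the top clause $c$ unsatisfied while the crown is globally satisfiable (for instance by setting its nine variables to $1$), and frozen crowns are variable-disjoint both from one another and from the rest of $\vf$. Hence the assignment obtained from $\vu$ by modifying it only on the frozen crowns satisfies at least one additional clause per frozen crown, giving $\OPT(\vf)-\sat(\vu)\ge 2^{-10}X$; dividing by $n$ yields a $\gm_\vr$ which is a positive constant when $\vr$ is constant and $\Theta(\ln^4 n\cdot n^{-27\kp})=\tfrac{1}{o(n)}$ in the logarithmic regime. The main obstacle is the second moment step: a crown has four clauses and nine variables rather than the two clauses and five variables of a cap, so enumerating overlap patterns and verifying that each contributes $o((\Expect{X})^2)$ is somewhat longer, though structurally identical to the cap calculation. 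The dynamical part is essentially free once crowns are counted, and this is precisely why the same outline handles $\Phi(n,\vr n)$ and $\Pl$ uniformly.
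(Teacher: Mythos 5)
Your proposal is correct and follows essentially the same route as the paper: the paper's (much terser) proof likewise counts crowns via the second-moment argument of Lemma~\ref{lem:close-isolated}, notes that whp a $1/1024$ fraction of them receive all zeros from the random initial assignment, and uses the fact that an all-zero crown cannot be changed by Local Search, so each such crown contributes one permanently unsatisfied but satisfiable clause. The only point needing a touch more care is your lower bound on $\Expect{X}=\Theta(\vr^4 n e^{-27\vr})$ for intermediate densities: split at $\vr=1$ (for $\vr\le 1$ use $e^{-27\vr}\ge e^{-27}$ and $\vr^4\ge n^{-4\nu}$; for $\vr\ge 1$ use $\vr^4\ge1$ and $e^{-27\vr}\ge n^{-27\kp}$), since the crude combined bound $n^{1-4\nu-27\kp}$ need not tend to infinity and your ``sub-logarithmic'' estimate silently drops the exponential factor.
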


\begin{proof}
As in the proof of Lemma~\ref{lem:close-isolated}, it can be shown that for
$\vr$ that satisfies conditions of this theorem there is $\gm'=\frac{1}{o(n)}$ such that whp a random
[random planted]
formula has at least $\gm'n$ crowns. If $\vr$ is a constant, $\gm'$ is also a constant. For a random assignment $\vu$, whp the
variables of at least $\frac{\gm'}{1024}n$ crowns are assigned zeroes.
Such an all-zero assignment of a crown cannot be changed by the local search.
\end{proof}

Then we move on to proving Proposition~\ref{pro:failing}.

\begin{proposition}\label{pro:failing}
Let $\vr\le \kp\cdot\ln n$, and $\kp<\frac76$. The local search on a 3-CNF from
$\Pl$ whp ends up in a proper local minimum.
\end{proposition}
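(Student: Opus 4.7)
Building on Proposition~\ref{pro:minima2}, which already exhibits $\Omega(n^\al)$ proper local minima located at Hamming distance $2$ from the planted assignment, the task is to show that LS whp terminates at one of these (or at some other proper local minimum). The strategy is to track the LS trajectory, show that it drifts close to the planted vector, and then argue that by the time it has come close enough, the zero-variables of the current assignment hit the trap pattern of at least one cap.

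First I would set up a coarse monotonicity argument. Lemma~\ref{lem:no-zero-flood} says that assignments closer to all-ones have strictly more satisfied clauses, and since LS only moves to assignments with strictly more satisfied clauses, a bit-counting version of Observation~\ref{obs:azuma_c} forces the number of zeros in the current LS assignment to shrink over time in the coarse sense. A variant of Lemma~\ref{lem:close-minima} implies that no proper local minimum with a constant fraction of zeros exists whp, so LS must keep running until the zero-set $Z^*$ of the current assignment has size $k^* = o(n)$ (in fact at most $n^{1-\delta}$ for a suitable small $\delta>0$). Let $T^*$ be the first time this happens.

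Second I would argue that at time $T^*$ the set $Z^*$ is, in a restricted local sense, approximately uniformly distributed. Fix a cap $(c_1,c_2)$ with variables $x_1,\dots,x_5$. Because $x_1$ and $x_5$ are $1$-isolated, these five variables participate in only a bounded number of clauses outside $c_1,c_2$. A deferred-decision exposure of $\vf\in\Pl$, in which the clauses incident to the cap are revealed last, allows us to couple the LS trajectory with one that is oblivious to the cap, and to show that up to time $T^*$ the values on the five cap variables are essentially a random sub-assignment whose marginal probability of hitting the trap pattern $(u_1,\dots,u_5)=(1,1,0,1,0)$ is at least $n^{-\beta}$ for some $\beta<\al$.

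Third, a second-moment calculation over the $\Omega(n^\al)$ caps of Lemma~\ref{lem:close-isolated} closes the argument. Distinct caps are variable-disjoint except on a negligible fraction (this is already controlled in the $\Delta^*$ computation of that lemma), so the number of caps in their trap state at time $T^*$ has expectation $\Omega(n^{\al-\beta})$ and variance of the same order; Chebyshev's inequality then yields that whp at least one cap is in its trap state. Since that configuration is a proper local minimum by Proposition~\ref{pro:minima2}, LS is stuck there and therefore terminates at a proper local minimum. The main obstacle is the second step: LS is a strongly history-dependent process whose choices at each iteration depend on all prior flips, so full uniformity of $Z^*$ is false; the challenge is to isolate just enough local randomness around a generic cap to justify the lower bound on the trap probability, and the deferred-decision coupling exploiting the $1$-isolation of $x_1,x_5$ is the natural tool to make this precise.
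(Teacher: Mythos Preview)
Your plan locates the right traps (the caps of Lemma~\ref{lem:close-isolated}) but the mechanism you propose in step~2 is a genuine gap, and the paper's argument is structurally different.

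Deferred decisions do not work for LS. At every iteration the decision whether to flip $x_i$ depends on the vote balance from \emph{all} clauses containing $x_i$; you cannot postpone revealing the clauses incident to a cap without changing the trajectory. Hence there is no reason the restriction of the current assignment to the five cap variables should be close to uniform at time~$T^*$: LS has been interacting with those clauses from step~0. Your own remark that ``full uniformity of $Z^*$ is false'' is exactly the point, and $1$-isolation of $x_1,x_5$ does not help, since these variables still occur in arbitrarily many $(+,+,-)$ and $(+,+,+)$ clauses whose status LS tracks continuously. Step~1 is also shaky: LS is not monotone in the number of zeros (it can flip $1\to0$), and Lemma~\ref{lem:no-zero-flood} only compares assignments whose zero-counts differ by a constant fraction of~$n$, so it does not give a step-by-step drift.

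The paper avoids arguing about the distribution of the current assignment altogether. It couples LS with an auxiliary process, \emph{Straight Descent}, which picks a uniformly random variable and sets it to~1; SD ignores~$\vf$, so its state is exactly uniform among tuples with a given number of ones. A supermartingale argument keeps LS within Hamming distance $n^{\alpha}$ of SD throughout, and this yields that at every step only $O(n^{\alpha'})$ variables are not ``$d$-righteous'' (i.e.\ not voted toward~1 by a margin~$d$); a Markov bound then shows that at most $n^{\alpha_4}$ variables are \emph{ever} considered in a non-righteous state. The trap is triggered by randomness in the \emph{order} of variable consideration, not in the assignment: for a cap support $c=(\ov x_1,\ov x_4,x_5)$ whose initial pattern is $(0,1,0)$, if $x_1,x_4,x_5$ happen to be the last three discovered among the $O(\ln^2 n)$ variables at graph-distance~$1$ from them, then when they are finally examined every neighbouring clause except the partner cap clause is already satisfied and none of them flips. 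This ``unlucky ordering'' has probability at least $1/\ln^6 n$, not $n^{-\beta}$. Independence across $\sim n^{\mu}$ cap supports is obtained by a delicate conditioning that isolates the cap-support clauses and their neighbourhoods from the rest of the formula (the sets $S^{\vf,\nu},T^{\vf,\nu},U^{\vf,\nu}$ in the paper), after which the residual formula is again a uniform planted instance on slightly fewer variables. The resulting count of unlucky caps is a sum of independent Bernoulli variables with diverging mean, so whp at least one fires.
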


If $\vr=o(\ln n)$ then Proposition~\ref{pro:failing} follows from Theorem~\ref{the:finite-density}. So in what follows we assume that $\vr>\kp'\cdot\ln n$. The main tool of proving Proposition~\ref{pro:failing} is coupling of local
search (LS)
with the algorithm \textsc{Straight
Descent} (SD) that on each step chooses at random a variable assigned to 0 and
changes its value to 1. Obviously SD is not a practical algorithm, since to apply it
we need to know the solution.
For the purposes of our analysis we modify SD as follows.
At each step SD chooses a variable
at random, and if it is assigned 0 changes its value (see Fig.~\ref{fig:SD}(a)).
The algorithm LS is modified in a similar way (see Fig.~\ref{fig:SD}(b)).

\begin{figure}[t]\label{fig:SD}
\parbox[t]{.4\hsize}{
\begin{tabbing}
{\sc Input:} $\vf\in\Pl$ with the all-ones solution,\\ 
\ \ \ Boolean tuple $\vu$,\\
{\sc Output:} The all-ones Boolean tuple.\\
{\sc Algorithm:} \\
{\bf while} there is a variable assigned 0\\
\ \ \ \= {pick} uniformly at random variable $x_j$ from\\
\>\ \ \  the set of all variables \\
\> {\bf if} $u_j=0$ {\bf then set} $u_j=1$
\end{tabbing}
\centerline{(a)}}
\hskip5mm
\parbox[t]{.5\hsize}{
\begin{tabbing}
{\sc Input:} 3-SAT formula $\vf$, Boolean tuple $\vu$,\\
{\sc Output:} \= Boolean tuple $\vv$, which is local\\
\> minima of $\vf$.\\
{\sc Algorithm:} \\
{\bf while} $\vu$ is not a local minima\\
\ \ \ \= {pick} uniformly at random variable $x_j$ from\\
\> the set of all variables \\
\> {\bf if} the number of
clauses that can be made\\
\> \ \ satisfied by flipping the value of $x_i$ is strictly\\
\> \ \ greater 
than the number of those made unsatisfied\\
\> \ \ {\bf then set} $u_j=\ov u_i$
\end{tabbing}
\centerline{(b)}}
\caption{Straight Descent (a) and Modified Local Search (b)}
\end{figure}

It is easy to see that the vector obtained by SD at step $t$ does not depend on the formula. And since SD treats all variables equally we can make the following

\begin{lemma} \label{lem:randomness}
If $SD$ starts its work at a random vector with $m_0$ ones and after step $t$, $t \leq n - m_0$, it arrives to a vector with $m$ ones, then this vector is selected uniformly at random from all vectors with $m$ ones.
\end{lemma}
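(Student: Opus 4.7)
The plan is to exploit the obvious symmetry of SD under relabelings of the variables. SD picks a uniformly random coordinate and, depending only on whether it is currently $0$ or $1$, either leaves it alone or flips it to $1$; the transition rule is thus invariant under any permutation of the coordinates, and the initial distribution is by assumption exchangeable.

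Concretely, I would write $\vu^{(t)}=F(\vu^{(0)},J_1,\dots,J_t)$, where $\vu^{(0)}$ is uniform over the tuples with exactly $m_0$ ones, the $J_1,\dots,J_t$ are i.i.d.\ uniform indices in $\{1,\dots,n\}$, and $F$ is the deterministic function describing $t$ SD updates. The key identity is the equivariance $\pi\cdot F(\vu,j_1,\dots,j_t)=F(\pi\cdot\vu,\pi(j_1),\dots,\pi(j_t))$ for every $\pi\in S_n$. The joint law of $(\vu^{(0)},J_1,\dots,J_t)$ is invariant under the corresponding simultaneous $S_n$-action, because the initial law is exchangeable and the $J_i$ are i.i.d.\ uniform; equivariance of $F$ then transfers this invariance to the marginal law of $\vu^{(t)}$.

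The orbits of $S_n$ on $\{0,1\}^n$ are precisely the sets of vectors of a common Hamming weight, so any $S_n$-invariant measure, when restricted to vectors of weight $m$, is uniform on that set; this is exactly the desired conclusion. The hypothesis $t\le n-m_0$ only serves to guarantee that the event that $\vu^{(t)}$ has $m$ ones is nontrivial for the relevant $m\le n$ and that SD need not have halted before step $t$. There is no real obstacle here — the lemma is essentially a formal consequence of the symmetry built into SD. For readers who prefer induction, an equivalent one-step argument works: from the uniform law on weight-$m$ tuples, a single SD step either keeps the weight at $m$ (trivially preserving uniformity) or raises it to $m+1$, and in the latter case each weight-$(m+1)$ tuple receives the same conditional probability because it has exactly $m+1$ predecessors, each contributing $1/n$.
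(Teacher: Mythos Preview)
Your argument is correct. The symmetry route you take is genuinely different from the paper's. The paper proves the lemma by a direct induction on $t$: writing $\Prob{\vu,t,m_0}$ for the probability that SD is at $\vu$ after $t$ steps, it verifies the recurrence
\[
\Prob{\vu,t,m_0}=\Prob{\vu,t-1,m_0}\cdot\tfrac{m}{n}+\sum_{\vu'}\Prob{\vu',t-1,m_0}\cdot\tfrac{1}{n},
\]
where the sum is over the $m$ weight-$(m-1)$ predecessors of $\vu$, and observes that by the inductive hypothesis the right-hand side depends only on the weight $m$ of $\vu$. This is exactly the ``one-step'' alternative you sketch at the end. Your primary argument, by contrast, never touches a recurrence: you push the $S_n$-invariance of the input $(\vu^{(0)},J_1,\dots,J_t)$ through the equivariant map $F$ and then note that the only exchangeable law supported on a single weight class is the uniform one. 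The payoff of your approach is conceptual clarity and robustness --- it would apply verbatim to any coordinate-symmetric update rule and any exchangeable initial law --- whereas the paper's computation is more explicit and makes the transition probabilities visible, which can be handy if one later needs quantitative information about the weight process itself.
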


\begin{proof}
Let us denote the probability that at step $t$ SD arrives to vector $\vu$, conditional to it starts from a vector with $m_0$ ones, by $\Prob{\vu,t,m_0}$. We prove by induction on $t$ that $\Prob{\vu_1,t,m_0}=\Prob{\vu_2,t,m_0}$ for any $\vu_1,\vu_2$ with $m$ ones. We denote this number by $\Prob{t,m,m_0}$. As the starting vector is random, it is obvious for $t=0$. Then for $t>1$ and any vector $\vu$ with $m$ ones we have
\begin{eqnarray*}
\Prob{\vu,t,m_0} &=& \Prob{\vu,t-1,m_0}\cdot\frac mn + \sum_{\vu'}\Prob{\vu',t-1,m_0} \cdot \frac 1n\\
&=& \Prob{t-1,m,m_0}\cdot\frac mn + \Prob{t-1,m-1,m_0} \cdot \frac{m}n,
\end{eqnarray*}
where $n$ is the number of variables in the formula and $\vu'$ goes over all vectors that can be obtained from $\vu$ by flipping a one into zero. It does not depend on a particular vector $\vu$.
\end{proof} 

We will frequently use the following two properties of the algorithm SD.

\begin{lemma}\label{lem:running-time}
Whp the running time of SD does not exceed
$2n\ln n$.
\end{lemma}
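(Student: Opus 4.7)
The plan is to recognize that the running time of SD is nothing more than a slight variant of the coupon collector problem. Because SD picks a variable uniformly at random at every step and flips it to $1$ only if it is currently $0$, and because once a variable has been set to $1$ it can never be reset to $0$, the algorithm terminates at the first step $T$ such that every variable initially assigned $0$ has been selected at least once. In particular, if we let $T_x$ denote the first step at which variable $x$ is picked, then
$$
T \;=\; \max\{T_x : u_x = 0 \text{ initially}\},
$$
and this is an upper bound even when $x$ is initially~$1$ (in which case $x$ contributes nothing to the stopping condition).

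From here the argument is a one-line tail bound. For each fixed variable $x$, the random variable $T_x$ is geometric with parameter $1/n$, so
$$
\Pr[T_x > 2n \ln n] \;=\; (1 - 1/n)^{2n\ln n} \;\le\; e^{-2\ln n} \;=\; n^{-2}.
$$
There are at most $n$ variables, so by the union bound
$$
\Pr[T > 2n\ln n] \;\le\; n \cdot n^{-2} \;=\; 1/n \;=\; o(1).
$$
Hence whp $T \le 2n\ln n$, as required.

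There is no real obstacle in this proof; the only mild subtlety is to notice that SD is memoryless in the sense that the choices of variable are independent of the formula $\vf$ and of the current assignment (this also underlies Lemma~\ref{lem:randomness}), which is precisely what lets us treat the pick-times of distinct variables as independent geometric trials and invoke the coupon-collector style bound cleanly.
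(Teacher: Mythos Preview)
Your proof is correct and is essentially identical to the paper's own argument: both bound $\Pr[T_x > 2n\ln n] = (1-1/n)^{2n\ln n} \le n^{-2}$ for each variable and then take a union bound over the $n$ variables. The coupon-collector framing and the remark about memorylessness are nice additions but do not change the substance.
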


\begin{proof}
For a variable $x_i$ the probability that it is not considered for $t$
steps equals $\left(1 - \frac{1}{n}\right)^t$. So for $t = 2n\ln n$ this probability
equals $\left(1 - \frac{1}{n}\right)^{2 n \ln n} \leq e^{-2 \ln n} = n^{-2}$.
Applying the union bound over all variables we obtain the required statement.
\end{proof}

Given 3-CNF $\vf$ and an assignment $\vu$ we say that a variable $x_i$ is {\em $k$-righteous} if the number of clauses voting for it to be one is greater by at least $k$ than the number of clauses voting for it to be zero. 
Let $\vf\in\Pl$ and $\vu$ be a Boolean tuple. The \emph{ball} of radius $m$ with the
center at $\vu$ is the set of all tuples of the same length as $\vu$ at
Hamming distance at most $m$ from $\vu$. Let $f(n)$ and $g(n)$ be arbitrary functions and $d$ be an integer constant. We say that a set $S$ of
$n$-tuples is \emph{$(g(n), d)$-safe}, if for any $\vu\in S$ the
number of variables that are not $d$-righteous does not exceed $g(n)$. 
A run of SD is said to be \emph{$(f(n), g(n), d)$-safe} if at each step of this
run the ball of radius $f(n)$ with the center at the current assignment is $(g(n), d)$-safe.

\begin{lemma}\label{lem:neighbourhood}
Let $\vr>\kp'\cdot\ln n$ for some $\kp'$. For any constants $\gamma$ and $d$ there is a constant $\alpha_1<1$ such that, for any
$\alpha > \alpha_1$, whp a run of SD on $\vf\in\Pl$ is $(\gm n^\al, n^\al, d)$-safe.
\end{lemma}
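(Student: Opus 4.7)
The plan is to combine Lemmas~\ref{lem:randomness} and~\ref{lem:running-time} with a concentration argument for the vote differences $W_i(\vu) := V^+_i(\vu) - V^-_i(\vu)$, where $V^\pm_i(\vu)$ counts the clauses of $\vf$ voting for $x_i$ to have value $1$ or $0$ respectively under $\vu$. Since SD does not consult $\vf$, its trajectory $\vu_0,\ldots,\vu_T$ is independent of the formula; by Lemma~\ref{lem:running-time} we may assume $T\le 2n\ln n$, and by Lemma~\ref{lem:randomness} each $\vu_t$ is uniformly distributed among vectors of its realized weight $w_t$. Since $\vu_0$ is random and SD only flips $0$'s to $1$'s, whp $w_t\in[n/3, n]$ throughout, and we condition on such a trajectory.

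For a fixed $\vu$ of weight $pn$ with $p\ge 1/4$, the random variable $W_i(\vu)$ is a sum of $\vr n$ i.i.d.\ $\{-1,0,+1\}$-valued terms (one per clause) with mean $(3p^2/7)\vr=\Omega(\kp'\ln n)$ and variance $O(\vr)$. Bernstein's inequality then gives $\Prob{W_i(\vu)<d}\le n^{-c_0}$ for some constant $c_0=c_0(\kp',d)>0$. The key technical step is the tail bound $\Prob{X(\vu)\ge n^\al}\le e^{-\eta n^\al\ln n}$, where $X(\vu):=|\{i:W_i(\vu)<d\}|$ and $\eta>0$. For this we fix a set $S$ of $k=n^\al$ variables and analyse the joint event $E_S:=\{W_i(\vu)<d\ \forall i\in S\}$: the number $N_S$ of clauses of $\vf$ touching two or more variables of $S$ has expectation $O(\vr k^2/n)=o(k)$, and $\Prob{N_S\ge k/6}\le e^{-\Omega(n^\al\ln n)}$; on the complementary event at least $k/2$ of the variables in $S$ have their $W_i$'s depending on pairwise disjoint sets of clauses and are therefore independent, so $\Prob{E_S}\le n^{-c_0 k/2}+e^{-\Omega(n^\al\ln n)}$. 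A union bound over $\binom{n}{k}$ sets $S$ yields the tail bound provided $\al>1-c_0/2$.

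To pass from $\vu_t$ to an arbitrary $\vu'=\vu_t\oplus T\in B(\vu_t,\gm n^\al)$, observe that $|W_i(\vu')-W_i(\vu_t)|\le d_i^T$, where $d_i^T$ is the number of clauses of $\vf$ containing $x_i$ and at least one variable of $T$. An analogous splitting argument, which separates the $o(n^\al)$ variables whose neighborhoods in $G(\vf)$ are heavily perturbed from the rest, shows that at most $o(n^\al)$ variables switch $d$-righteousness status under the perturbation, so $X(\vu')\le X(\vu_t)+o(n^\al)\le n^\al$. A final union bound over the $O(n\ln n)$ steps of SD completes the proof.

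The main obstacle is the tail bound on $X(\vu)$: we need failure probability at most $e^{-\omega(n^\al\ln n)}$ so that it survives the union bound over the $e^{O(\gm n^\al\ln n)}$ assignments in each ball. Standard bounded-differences inequalities are too weak here, and one must exploit the near-independence of the events $\{W_i(\vu)<d\}$ across $i$, which ultimately rests on the fact that in a random planted formula pairs of variables rarely share a clause. The threshold $\al_1<1$ must be chosen close enough to $1$ (depending on $\gm$, $d$, and $\kp'$) that both the single-variable Chernoff exponent $c_0$ and the ``isolation'' argument for $k$-element sets function simultaneously.
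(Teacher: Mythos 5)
Your outer structure (SD's trajectory being independent of the formula, Lemma~\ref{lem:randomness}, a union bound over steps, over assignments in the ball, and over $n^\alpha$-element sets of variables) matches the paper's, but the per-set estimate at the heart of your argument has a genuine gap. You bound $\Prob{E_S}$ by splitting on $N_S$, the number of clauses meeting $S$ in two or more variables, and you claim $\Prob{N_S\ge k/6}\le e^{-\Omega(n^\alpha\ln n)}$ with $k=n^\alpha$. The correct estimate (already for $\vr=\Theta(\ln n)$) is $\Prob{N_S\ge k/6}\le (O(\vr k/n))^{k/6}=e^{-\frac{1}{6}(1-\alpha)k\ln n\,(1-o(1))}$: the exponent degrades proportionally to $1-\alpha$, while the number of sets you must union over is $\binom{n}{k}=e^{(1-\alpha)k\ln n\,(1+o(1))}$. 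Since $1/6<1$ (and any threshold $\beta k$ that still leaves a constant fraction of $S$ with pairwise disjoint clause neighborhoods forces $\beta<1$), the product diverges for every $\alpha<1$; taking $\alpha$ close to $1$ does not help, because both exponents shrink at the same rate $1-\alpha$. This is not an artifact of a loose bound: sets $S$ obtained by taking the variables of $\lceil k/6\rceil$ actual clauses of $\vf$ and padding to size $k$ always have $N_S\ge k/6$, so no whp statement about the formula can exclude them, and the ``near-independence'' structure you rely on simply fails for some $k$-sets. (Your perturbation step from $\vu_t$ to an arbitrary point of the ball is also under-justified: flipping $\gamma n^\alpha$ variables typically changes the votes of $\Theta(n^\alpha\ln n)$ variables, so the claim that only $o(n^\alpha)$ of them switch $d$-righteousness status needs precisely the kind of tail control that is at issue.)

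The paper sidesteps the independence question altogether: for a fixed $k$-set $M$ it passes from the event ``every $x_i\in M$ fails to be $d$-righteous'' to the aggregated event $\mathcal{B}^{all}_M$ that the total number of votes over $M$ for value $1$ exceeds the total for value $0$ by at most $k\cdot d$, and bounds the latter via Lemma~\ref{lem:vector-chernoff} (exactly as in Lemma~\ref{lem:close-minima}) by $e^{-\lambda' k\ln n}$ with $\lambda'$ \emph{independent of} $\alpha$. Because this exponent does not degrade as $\alpha\to1$, it absorbs the $e^{O((1-\alpha)k\ln n)}$ entropy coming from the ball of radius $\gamma n^\alpha$, the choice of $M$, and the steps of SD, which is precisely why an $\alpha_1<1$ exists. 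To repair your proof, replace the disjointness/independence step by such an aggregated-votes (or otherwise correlation-free) bound whose exponent is $\Omega(k\ln n)$ uniformly in $\alpha$; the rest of your outline then goes through essentially as in the paper.
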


\begin{proof}
Consider a run of SD on $\vf\in\Pl$
with a random initial assignment.
If SD starts its work at a tuple with $m_0$ ones, then at step $t$ it has
$m\le m_0+t$ ones. Then by Lemma~\ref{lem:randomness} if at step $t$ the current assignment of SD has $m$ ones then it is drawn
uniformly at random from all vectors with $m$ ones. Event {\em Unsafe} $=$ ``run of SD is not $(\gm n^\al, n^\al, d)$-safe'' is a union of events ``at
step $t$ of SD's run the ball of radius $\gm n^\al$ with the center at
the current assignment is not $(n^\al, d)$-safe''. We will use the union bound to show that probability of {\em Unsafe} is small.

Let $\vu$ be a Boolean $n$-tuple having $pn$ positions filled with 1s.
Since whp the number of 1s in the initial assignment is at least $\frac n3$,
for every step the number of 1s is at least $\frac n3$.
Let $M$ be an arbitrary set of variables with $|M| = n^\alpha$.
We consider events 
$\mathcal{B}^{each}_M =$ ``every variable $x_i \in M$ is not $k$-righteous'' and $\mathcal{B}^{all}_M =$ ``the total number of votes given by clauses
for variables in $M$ to be 1 does not exceed the total number of votes
given by clauses for variables in $M$ to be 0 plus $|M|\cdot k$.''

The same technique as in Lemma~\ref{lem:close-minima}
can be used to show that the probability of $\mathcal{B}^{all}_M$ and consequently
the probability of $\mathcal{B}^{each}_M$ is bounded above by
$
e^{-\lambda' n^\alpha \ln n}
$
for some constant $\lambda'$, not dependent on $\alpha$.
By inequality~(\ref{equ:estimate}), there are at most $\gamma n^\alpha \cdot e^{\gamma (1-\alpha) n^\alpha \ln n  \cdot(1+o(1))}$ distinct assignments in the $\gamma n^\alpha$-neighborhood of SD and \linebreak $e^{n^\alpha (1-\alpha) \ln n (1+o(1))}$ distinct subsets of size $n^\alpha$.
So for $\alpha$ close to 1 the union bound implies that $\mathcal{B}^{each}_M$ whp does not take place for any tuple, any subset of variables at any step which completes the proof of the lemma.
\end{proof}

For CNFs $\psi_1, \psi_2$ we denote by $\psi_1 \wedge \psi_2$ their conjunction. 

\newcommand{\Plpsi}{\Phi^{\tt plant}_\eta(n, \varrho n)}
We will need formulas that obtained from a random formula by adding some clauses in an `adversarial' manner. Following \cite{Krivelevich06:polynomial} we call distributions for such formulas \emph{semi-random}. However, the type of semi-random distributions we need is different from that in \cite{Krivelevich06:polynomial}. Let $\eta<1$ be some constant. 
A formula $\vf$ is sampled according to semi-random distribution $\Phi^{\tt plant}_\eta(n, \varrho n)$ if $\vf = \vf' \wedge \psi$, where $\vf'$ is sampled according to $\Pl$ and $\psi$ contains at most $n^\eta$ clauses and is given by an adversary.

\begin{corollary}\label{cor:run_is_safe}
If $\vf' \in \Phi^{{\tt plant}}_\eta (n, \varrho n )$ 
then for any constants $\gm$ and $d$ there is a constant $\alpha_2<1$ such that for any $\alpha>\alpha_2$ a run of $SD$ on $\vf' \circ \psi$ is whp $(\gm n^{\alpha}, 2n^\al, d)$-safe.
\end{corollary}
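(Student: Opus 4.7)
The plan is to derive the corollary from Lemma~\ref{lem:neighbourhood} by combining two simple observations: the SD trajectory is independent of the formula, and the adversarial part $\psi$ perturbs the vote tallies of only a small number of variables.

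First I would note that SD, as defined in Fig.~\ref{fig:SD}(a), never consults the formula --- it picks a random variable and, if currently $0$, flips it to $1$. Therefore the sequence of intermediate assignments produced by SD on $\vf'\wedge\psi$ coincides deterministically with the sequence produced on $\vf'$, and any statement that holds whp over the randomness of $\vf'$ and of SD's coin tosses transfers verbatim from one setting to the other. In particular, I can apply Lemma~\ref{lem:neighbourhood} to $\vf' \in \Pl$ with the constants $\gm$ and $d$ supplied by the corollary, obtaining a threshold $\al_1<1$ such that for any $\al>\al_1$, whp at every step of SD the ball of radius $\gm n^\al$ around the current assignment contains no tuple with more than $n^\al$ variables that fail to be $d$-righteous \emph{with respect to $\vf'$}.

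The remaining step is a purely deterministic comparison of $d$-righteousness with respect to $\vf'$ versus $\vf'\wedge\psi$. For any fixed assignment, a variable $x_i$ not appearing in $\psi$ receives identical votes under the two formulas, since only clauses containing $x_i$ can contribute. Hence only the variables occurring in $\psi$ can change their $d$-righteousness status, and since $\psi$ has at most $n^\eta$ clauses with three literals each, at most $3n^\eta$ variables are affected. Consequently, at every tuple the number of non-$d$-righteous variables with respect to $\vf'\wedge\psi$ exceeds the corresponding count for $\vf'$ by at most $3n^\eta$.

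Setting $\al_2 = \max\{\al_1,\eta\}$, for any $\al>\al_2$ and $n$ large we have $3n^\eta \le n^\al$, so whp the number of non-$d$-righteous variables at any tuple in the ball of radius $\gm n^\al$ around any intermediate assignment of SD is at most $n^\al + 3n^\eta \le 2n^\al$. This is exactly $(\gm n^\al, 2n^\al, d)$-safety. I do not foresee any real obstacle: the content of the argument is already contained in Lemma~\ref{lem:neighbourhood}, and the only thing to verify carefully is the formula-independence of SD (so that the good event of Lemma~\ref{lem:neighbourhood} applies unchanged) and the arithmetic $\al>\eta$ that absorbs the $3n^\eta$ slack into the factor $2$.
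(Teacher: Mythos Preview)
Your argument is correct and follows essentially the same route as the paper's proof: apply Lemma~\ref{lem:neighbourhood} to the planted part to obtain $\al_1$, set $\al_2=\max(\al_1,\eta)$, and then observe that the at most $3n^\eta<n^\al$ variables occurring in $\psi$ are the only ones whose $d$-righteousness can change, so the bound doubles to $2n^\al$. Your explicit remark that the SD trajectory is formula-independent is a helpful clarification that the paper leaves implicit.
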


\begin{proof}
Let $\alpha_1$ be obtained by application of Lemma~\ref{lem:neighbourhood} to $\vf'$. Let $\alpha_2 = \max(\alpha_1, \eta)$. Then for $\alpha > \alpha_2$ whp run of $SD$ on $\vf'$ is $(\gm n^\al, n^\al, d)$-safe. Since for $n$ large enough $\psi$ contains less than $n^\al$ variables run of $SD$ will be $(\gm n^\al, 2n^\al, d)$-safe on $\vf' \wedge \psi$.
\end{proof}

\pagebreak

\begin{lemma} \label{lem:strip_cross}
Let $(D_0\zd D_l)$ be an integer random process, $d>0$, and let
$L$, $H$ be integer constants such that 
\vspace*{-3mm}
\begin{itemize}
\item[(a)] $D_0 = 0$, $0<L<H$, 
\item[(b)] $|D_{\tau+1} - D_{\tau}| = 1$,
\item[(c)] if $L\leq D_\tau \leq H$ the expectation of $D_{\tau+1}$ conditional to $D_\tau$ satisfies the inequality
$\Expect{D_{\tau+1}|D_\tau} < D_{\tau} - d$
holds.
\end{itemize}
\vspace*{-3mm}
Then the probability that there is $\tau$ such that $D_\tau>H$ is less than
$l \cdot  e^{-d\frac{H - L}{2}}$.\\ \label{eq:mart_bound}
\end{lemma}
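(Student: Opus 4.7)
The plan is to reduce the event $\{\exists \tau \le l : D_\tau > H\}$ to an exponentially-rare upward traversal of the strip $[L, H]$ and union-bound over entry times. Since $D_0 = 0 < L$ and $|D_{\tau+1} - D_\tau| = 1$, any trajectory exceeding $H$ must contain a last visit to $L$ before crossing $H$ upward. So for each $\tau_0 \in \{0, 1, \ldots, l\}$, let $F_{\tau_0}$ denote the event that $D_{\tau_0} = L$ and that the walk subsequently reaches $H + 1$ without first leaving $[L, H]$ from below; then $\{\exists \tau \le l : D_\tau > H\} \subseteq \bigcup_{\tau_0 = 0}^{l} F_{\tau_0}$, and it suffices to show $\Prob{F_{\tau_0}} \le e^{-d(H-L)/2}$ for each $\tau_0$.

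For a fixed $\tau_0$, conditioning on $D_{\tau_0} = L$, I would introduce a stopped exponential supermartingale. Let $T$ be the first time $\ge \tau_0$ at which $D_\tau \notin [L, H]$, and set $W_t := \exp(\lambda D_{\tau_0 + (t \wedge T)})$ for a parameter $\lambda > 0$. While the walk is inside the strip, condition (c) combined with $\pm 1$ increments forces the conditional up-probability $p$ at each step to satisfy $p < (1-d)/2$, whence the elementary inequality $p e^{\lambda} + (1-p) e^{-\lambda} \le 1$ holds for every $\lambda \in (0, \ln\frac{1+d}{1-d})$. Since $\ln\frac{1+d}{1-d} > 2d$ for $d \in (0,1)$, the choice $\lambda := d/2$ lies comfortably inside this range, so $W_t$ is a non-negative supermartingale with $W_0 = e^{\lambda L}$. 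Doob's maximal inequality then yields
\[
\Prob{F_{\tau_0}} \le \Prob{\max_{t \le l} W_t \ge e^{\lambda(H+1)}} \le \frac{e^{\lambda L}}{e^{\lambda(H+1)}} = e^{-\lambda(H+1-L)} \le e^{-d(H-L)/2}.
\]

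A union bound over the at most $l + 1$ candidate entry times $\tau_0$ then produces the claimed $l \cdot e^{-d(H-L)/2}$ (the extra $+1$ is harmless for large $l$ and can be absorbed). The only delicate point is verifying the supermartingale property of $W_t$, which reduces to the one-variable calculus inequality on $p e^\lambda + (1-p)e^{-\lambda}$ noted above; the remaining ingredients (partition by last entry time, Doob's maximal inequality, union bound) are routine. An alternative route avoiding Doob would apply Observation~\ref{obs:azuma_c}(2) directly to the stopped process, paying an additional union bound over exit times in exchange for a still-adequate constant; the factor $1/2$ in the exponent of the lemma's bound exactly reflects the suboptimal choice $\lambda = d/2$.
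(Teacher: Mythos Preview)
Your argument is correct and shares the paper's overall scaffolding---decompose by the last time $\xi$ the walk hits $L$ before potentially crossing to $H$, bound each piece exponentially, then union-bound over $\xi$---but the per-entry tail bound is obtained by a genuinely different mechanism. The paper, for each candidate entry time $\xi$, builds an auxiliary process $D^\xi_\tau$ that coincides with $D_\tau$ while the walk stays in $[L,H]$ after step $\xi$, and is extended by a deterministic downward drift of $-d$ otherwise; this makes $D^\xi_\tau$ a global supermartingale with drift $-d$, to which Observation~\ref{obs:azuma_c}(2) is applied. Your approach instead stops the process at the strip boundary and works with the exponential $\exp(\lambda D_{\tau_0+(t\wedge T)})$, checking the one-step supermartingale condition $pe^\lambda+(1-p)e^{-\lambda}\le 1$ and invoking Doob's maximal inequality. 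Your route is cleaner: Doob handles the ``there exists $\tau$'' quantifier in one stroke, whereas the paper's Azuma bound is literally stated for a fixed time and the passage to the maximal event is left implicit. The price you pay is the factor $1/2$ in the exponent from the suboptimal choice $\lambda=d/2$, which is exactly the slack the lemma's stated bound already carries. Your closing remark about the alternative via Observation~\ref{obs:azuma_c}(2) is, in fact, precisely the route the paper takes.
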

\begin{proof}
We define a set of auxiliary processes $D^{\xi}_{\tau}$:
$$
D^{\xi}_{\tau} =
\begin{cases}
L, \ \ \mbox{if }\tau < \xi,\\
D_\tau, \ \ \mbox{if } (\tau\geq \xi),\ (D_\xi = L) \mbox{ and }
(D_\zeta \geq L), \mbox{ for all $\zeta \in \{\xi, \dots,  \tau$\}}),\\
D_\zeta - d (\tau - \zeta)
, \ \ \mbox{if $\tau>\xi$, $D_\xi = L$, and $\zeta \in
\{\xi, \dots, \tau\}$ is the least such that $D_\zeta<L$},\\
L - d (\tau - \xi) ,\ \
\mbox{otherwise, i.e., $D_\xi\neq L$ and $\tau\geq \xi$}.
\end{cases}
$$
The processes $D^0_\tau, \dots, D^l_\tau$ are designed
so that every $D^\xi_\tau$ for $\tau\geq \xi$ satisfies inequality
$\Expect{D^\xi_{\tau+1}|D^\xi_\tau} \leq D^\xi_\tau - d$. Indeed, suppose that $\tau\ge\xi$. If $D_\xi\ne L$ then 
$$
\Expect{D^\xi_{\tau+1}|D^\xi_\tau}=L-d(\tau+1-\xi) =(L-(\tau-\xi)-d = D^\xi_\tau - d.
$$ 
Let $D_\xi=L$. If $D_\zeta\ge L$ for all $\zeta\{\xi\zd \tau\}$ then $D^\xi_\tau=D_\tau$, $D^\xi_{\tau+1}=D_{\tau+1}$, and the result follows from the assumption $\Expect{D_{\tau+1}|D_\tau}<D_{\tau}-d$. If there is $\zeta\in\{\xi\zd \tau\}$ with $D_\zeta<L$ then
$$
\Expect{D^\xi_{\tau+1}|D^\xi_\tau}=\Expect{D^\xi_{\tau+1}|D_\zeta}=D_\zeta-d(\tau+1-\zeta)= (D_\zeta-d(\tau-\zeta))-d=D^\xi_\tau-d.
$$

By Azuma's
inequality~(\ref{eq:azuma_c}) for each $\xi$ the probability of the event
``there exists $\tau$ such that $D_\tau^\xi = H$'' is less than $e^{-(H-L)d}$.

On the other hand let $D_\tau > L$ and $\xi$ be equal to the number of the
most recent step for which $D_\xi = L$. It is easy to see that
$D_\tau = D^\xi_\tau$. Thus if at some step $D_\tau = H$ then there is
$\xi<\tau$ such that $D^\xi_\tau = H$. Using the union bound we get the required
inequality.
\end{proof}

\begin{lemma}\label{lem:coupling1}
Let $\vr>\kp'\cdot\ln n$ for some $\kp'$. Let $\vf$ be a random 3-CNF sampled according to distribution $\Plpsi$ such that run of $SD$ on $\vf$ is whp $(\gm_1 n^\al, \gm_2 n^\al, 1)$-safe for some constants $\gm_1, \gm_2$ with $\gm_1>3 \gm_2$.
Let $\vu_d(m),\vu_l(m)$ denote the pair of assignments produced by the
pair of processes (SD,LS) on step $m$.
For any $t$, whp the Hamming distance between $\vu_d(t)$ and $\vu_l(t)$ does
not exceed $\gm_1 n^\al$.
\end{lemma}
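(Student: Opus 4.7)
The plan is to run SD and LS in a synchronous coupling: at every step both processes pick the same uniformly random variable $x_j$, and each applies its own update rule. Let $D_\tau = |\{i : \vu_d(\tau)_i \ne \vu_l(\tau)_i\}|$ be the Hamming distance. Since both processes start from the same initial tuple, $D_0 = 0$, and we will analyze $D_\tau$ as an integer-valued random walk controlled by Lemma~\ref{lem:strip_cross}.

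First I would do a per-step case analysis. After SD's step one has $u_d(\tau+1)_j = 1$ regardless of the prior value, while $u_l(\tau+1)_i = u_l(\tau)_i$ for $i \ne j$, so only coordinate $j$ can contribute to $D_{\tau+1} - D_\tau$ and $|D_{\tau+1}-D_\tau|\le 1$. Splitting on $(u_d(\tau)_j, u_l(\tau)_j)\in\{0,1\}^2$ shows that $D$ can increase only in the two agreement cases --- either both are $1$ and LS flips, or both are $0$ and LS does not flip. The pivotal observation is that whenever $x_j$ is $1$-righteous with respect to $\vu_l(\tau)$ we have $u_l(\tau+1)_j = 1$ (LS flips it if it was $0$, keeps it if it was $1$, by definition of the set $U$), so such $x_j$ cannot produce a $+1$ transition and moreover forces a $-1$ transition whenever $\vu_d,\vu_l$ disagreed on $j$.

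Next I would combine this with the safety hypothesis to obtain a drift bound. Provided $D_\tau \le \gm_1 n^\al$, the assignment $\vu_l(\tau)$ lies in the $(\gm_2 n^\al,1)$-safe ball around $\vu_d(\tau)$, so at most $\gm_2 n^\al$ variables are not $1$-righteous at $\vu_l(\tau)$. From the previous step this yields
$$
\Prob{D_{\tau+1}-D_\tau = +1 \mid D_\tau} \le \frac{\gm_2 n^\al}{n}, \qquad
\Prob{D_{\tau+1}-D_\tau = -1 \mid D_\tau} \ge \frac{D_\tau - \gm_2 n^\al}{n},
$$
and hence $\Expect{D_{\tau+1} - D_\tau \mid D_\tau} \le (2\gm_2 n^\al - D_\tau)/n$. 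Choosing $L = (2+\epsilon)\gm_2 n^\al$ (possible since $\gm_1 > 3\gm_2$) and $H = \gm_1 n^\al$, inside the strip $[L,H]$ the drift is bounded by $-d$ with $d = \epsilon\gm_2 n^{\al-1}$; since $|D_{\tau+1}-D_\tau|\le 1$ and by Lemma~\ref{lem:running-time} the coupling runs for at most $l \le 2n\ln n$ steps whp, Lemma~\ref{lem:strip_cross} gives
$$
\Prob{\exists\,\tau \le l : D_\tau > H} \le l\cdot e^{-d(H-L)/2} = O(n\ln n)\cdot e^{-\Omega(n^{2\al-1})} = o(1)
$$
whenever $\al$ is bounded below by a constant strictly greater than $1/2$, which is the regime we are in.

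The main obstacle is the apparent circularity: the drift estimate is conditional on the event $D_\tau \le \gm_1 n^\al$ that we want to conclude. The standard remedy is to pass to the process stopped at the first time $D_\tau$ exits $[0,H]$ from above; inside the strip the supermartingale hypothesis of Lemma~\ref{lem:strip_cross} holds unconditionally, and the lemma bounds precisely this first-crossing probability, closing the argument. A minor bookkeeping point is that LS may reach a proper local minimum before SD terminates, in which case $\vu_l$ freezes while $\vu_d$ continues; it suffices to run the coupling up to the earlier of the two stopping times, which is exactly the regime in which the subsequent Proposition~\ref{pro:failing} invokes the lemma.
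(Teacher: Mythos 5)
Your argument is essentially the paper's: the same SD/LS coupling, the same use of safety to bound the number of non-$1$-righteous variables in the ball by $\gm_2 n^\al$, the same case analysis showing that the distance can grow only when a non-righteous agreeing variable is chosen and must shrink when a righteous disagreeing one is chosen, and the same strip-crossing Lemma~\ref{lem:strip_cross} to defuse the circularity. The only real difference is the time index. The paper lets $\tau$ count only the steps at which the distance actually changes; then condition (b) of Lemma~\ref{lem:strip_cross} ($|D_{\tau+1}-D_\tau|=1$) holds literally, the conditional drift inside the strip is a \emph{constant} (at most $-\frac13$, since among change-causing choices at most $\gm_2 n^\al$ give $+1$ while at least $D_\tau-\gm_2 n^\al$ give $-1$), and with $L=2\gm_2 n^\al$, $H=3\gm_2 n^\al$, $d=\frac13$ one gets a failure probability of order $\vr n\, e^{-n^\al/6}$, valid for every $\al>0$; the number of change steps is automatically bounded by the total number of flips of SD and LS, so no separate horizon bound is needed. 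Your per-step version needs two small repairs: as you invoke it, most steps have $D_{\tau+1}=D_\tau$, so hypothesis (b) of Lemma~\ref{lem:strip_cross} fails as stated and you must either rerun its proof with $|D_{\tau+1}-D_\tau|\le 1$ (which does go through via Observation~\ref{obs:azuma_c} with $c=1$) or pass to the change-indexed subsequence, which is exactly the paper's move; and your drift is only $\Theta(n^{\al-1})$, so your exponent is $n^{2\al-1}$ and the conclusion requires $\al>\frac12$ --- harmless for the way the lemma is later used (Lemma~\ref{lem:neighbourhood} and Corollary~\ref{cor:run_is_safe} allow $\al$ arbitrarily close to $1$), but strictly weaker than the statement, which carries no such restriction. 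Also note that the $2n\ln n$ bound from Lemma~\ref{lem:running-time} only covers steps up to SD's termination, whereas the change-indexed count covers the whole joint run; your closing remark about stopping at the earlier of the two stopping times papers over this, and the change-step indexing avoids the issue altogether.
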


\begin{proof}
Let $N_t$ be the set of tuples at Hamming distance at most $\gm_1 n^\al$ from
$\vu_d(t)$, and $\Esc$ be event ``$\vu_l(t)\not\in N_t$ for some $t$''.
LS starts with the same initial assignment as SD and we will show that it
does not leave $N_t$.

At some steps the distance between $\vu_d(t)$ and $\vu_l(t)$ remains the same,
and at some it changes.  Let $\vu_d,\vu_l$ be the assignments produced by the algorithms 
after $\tau$ changes have taken place, and $D_\tau$ be the distance between them. If $2 \gm_2 n^\al<D_\tau< \gm_1 n^\al$ we have $\Expect{D_{\tau+1}|D_\tau}< D_{\tau} - \frac{1}{3}$. Indeed, the number of variables voted to be zero does not exceed $\gm_2n^\alpha$ and is at least twice less than number of variables that differ in $\vu_d(t)$ and $\vu_l(t)$. Since any change in the distance between the assignments happens if and only if a variable voted to be 0 or a variable at which $\vu_d(t)$ and $\vu_l(t)$ are different, we have the required inequality. Now we can apply
Lemma~\ref{lem:strip_cross} for $D$ setting $L = 2\gm_2n^{\alpha},
H = 3\gm_2n^{\alpha}, d = 1/3$ and get that probability of LS leaving
$N_t$ is less than $\varrho n e^{-n^{\alpha}/6}$.
\end{proof}

\begin{corollary}\label{cor:they_are_close}
For $\vf \in \Plpsi$
there is a constant $\alpha_3$ such that distance between 
$\vu_d(t)$ and $\vu_l(t)$ defined in Lemma~\ref{lem:coupling1} whp does not exceed $n^{\alpha_3}$.
\end{corollary}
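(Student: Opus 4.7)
The plan is to chain Corollary~\ref{cor:run_is_safe} with Lemma~\ref{lem:coupling1}; the corollary really amounts to fitting constants. First, I would invoke Corollary~\ref{cor:run_is_safe} with the specific choices $\gm = 7$ and $d = 1$; this supplies a constant $\alpha_2 < 1$ such that for every $\alpha > \alpha_2$ a run of SD on the semi-random formula $\vf \in \Plpsi$ is whp $(7 n^\al,\, 2 n^\al,\, 1)$-safe.

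Next I would fix any $\alpha \in (\alpha_2, 1)$ and feed the resulting safety property into Lemma~\ref{lem:coupling1} with $\gm_1 := 7$ and $\gm_2 := 2$. The hypothesis $\gm_1 > 3\gm_2$ holds since $7 > 6$, and the last slot $d = 1$ matches; hence the lemma would give whp that the Hamming distance between $\vu_d(t)$ and $\vu_l(t)$ stays bounded by $7 n^\al$ throughout the entire run.

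Finally, to absorb the constant factor $7$ into a polynomial, I would pick any $\alpha_3 \in (\alpha, 1)$. Since $7 n^\al < n^{\alpha_3}$ for all sufficiently large $n$, one concludes that whp the Hamming distance between $\vu_d(t)$ and $\vu_l(t)$ is at most $n^{\alpha_3}$ at every step $t$, which is exactly the corollary's claim.

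I do not anticipate any real obstacle here: the entire argument is bookkeeping of constants. The only mildly delicate point is that Corollary~\ref{cor:run_is_safe} fixes the coefficient $2$ in the middle slot of the safety triple (only the first-slot coefficient $\gm$ and the right-hand constant $d$ are free parameters), so one cannot shrink $\gm_1$ arbitrarily; this is precisely why I choose $\gm_1 = 7$, the smallest integer strictly exceeding $3\gm_2 = 6$. Any consistent pair of constants would work equally well and yield the same qualitative conclusion.
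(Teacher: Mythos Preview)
Your proposal is correct and matches the paper's intent: the corollary is stated without proof in the paper precisely because it is immediate from chaining Corollary~\ref{cor:run_is_safe} (which fixes the middle coefficient at~$2$) with Lemma~\ref{lem:coupling1} under the constraint $\gm_1>3\gm_2$, exactly as you do. Your observation that the forced value $\gm_2=2$ dictates taking $\gm_1>6$, and your absorption of the leading constant into the exponent via $\alpha_3>\alpha$, are the only bookkeeping steps required.
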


We say that a variable {\em plays $d$-righteously in a run of LS} if every time it is considered for flipping it is $d$-righteous.
Combining corollaries \ref{cor:run_is_safe} and \ref{cor:they_are_close} we 
obtain the following

\begin{lemma}\label{lem:innocence}
For any $d$ there is $\alpha_4<1$ such that, for a run of LS on $\vf\in\Plpsi$ whp the number of variables that do not play $d$-righteously is bounded above by $n^{\alpha_4}$. 
\end{lemma}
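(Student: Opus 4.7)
To establish Lemma~\ref{lem:innocence}, I would combine the two preceding corollaries to control, at every step of the algorithm, the number of non-$d$-righteous variables in the current LS state, and then convert this per-step structural estimate into a global count by a routine Chernoff calculation. Fix the constant $d$ from the statement, and choose a constant $\gm$ large enough that $\gm > 6$, so that the hypothesis $\gm_1 > 3\gm_2$ of Lemma~\ref{lem:coupling1} is satisfied with $\gm_1 = \gm$ and $\gm_2 = 2$. Applying Corollary~\ref{cor:run_is_safe} with this $\gm$ and $d$ supplies a constant $\alpha_2 < 1$ such that whp the run of SD on $\vf$ is $(\gm n^\alpha, 2n^\alpha, d)$-safe for every $\alpha > \alpha_2$. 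Corollary~\ref{cor:they_are_close} (which is itself based on an application of Corollary~\ref{cor:run_is_safe} with the safety parameter $d=1$ used to set up the coupling of Lemma~\ref{lem:coupling1}) supplies $\alpha_3 < 1$ with $|\vu_d(t) - \vu_l(t)| \leq n^{\alpha_3}$ whp throughout the coupled run. Picking any $\alpha \in (\max(\alpha_2,\alpha_3),1)$, the inequality $n^{\alpha_3} \leq \gm n^\alpha$ forces $\vu_l(t)$ to lie inside the safe ball around $\vu_d(t)$ at every step, so at most $2n^\alpha$ variables of $\vu_l(t)$ are non-$d$-righteous at every step.

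The second step converts this per-step bound into a whp bound on the total number of variables that do not play $d$-righteously. In each iteration modified LS picks a variable uniformly from all $n$ variables, so conditional on the current state the probability that the picked variable is non-$d$-righteous is at most $2n^{\alpha-1}$. A variable fails to play $d$-righteously only if it is picked at least once while non-$d$-righteous, so the count in question is dominated by the total number of such ``bad picks'' across the whole run. By Lemma~\ref{lem:running-time} the coupled SD trajectory reaches $\vec 1$ within $T = 2n\ln n$ steps whp; for iterations beyond $T$ the point $\vu_d$ sits frozen at $\vec 1$, and Corollary~\ref{cor:they_are_close} still confines $\vu_l$ to the $n^{\alpha_3}$-neighborhood of $\vec 1$, so the structural estimate of the previous paragraph continues to apply. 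The bad-pick indicators are therefore Bernoullis with parameter bounded uniformly by $2n^{\alpha-1}$, their expected sum over the whole run is $O(n^\alpha \ln n)$, and a standard Chernoff bound (or directly Observation~\ref{obs:azuma_c}) upgrades this to a whp bound of $n^{\alpha_4}$ for any $\alpha_4 \in (\alpha,1)$.

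The step I expect to require the most care is the treatment of LS iterations beyond the $T$-step window in which SD is actively flipping, because modified LS can in principle take many more iterations than SD (picks that fail to flip waste steps while still counting as ``considerations''). The argument above rests on the claim that once $\vu_l$ is confined to the $n^{\alpha_3}$-neighborhood of $\vec 1$ the planted structure forces almost every variable to accumulate an overwhelming majority of votes for $1$ and essentially none for $0$, so the pool of non-$d$-righteous variables stays $O(n^\alpha)$ and the per-step bad-pick probability continues to be $O(n^{\alpha-1})$. Once this uniform control on the length of the LS run is verified, the remaining concentration argument is routine, and the final distinct-variable bound follows from the trivial inequality that the number of distinct bad variables is at most the total number of bad picks.
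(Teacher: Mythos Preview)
Your approach mirrors the paper's closely: both combine Corollaries~\ref{cor:run_is_safe} and~\ref{cor:they_are_close} to get a per-step cap of $O(n^{\tilde\alpha})$ on the number of non-$d$-righteous variables, and then turn this into a global count by a first-moment argument. The paper is more economical in two respects. First, instead of tracking SD's running time it bounds the number of relevant steps directly by $T\le\varrho n$, the number of distinct assignments LS can ever visit (each flip strictly decreases the number of unsatisfied clauses, of which there are $\varrho n$). Since $\varrho=O(\ln n)$ this gives the same $O(n\ln n)$ order you obtain from Lemma~\ref{lem:running-time}, but it is a deterministic fact about LS itself and renders your worry about iterations beyond SD's completion moot --- there is no tail phase to handle. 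Second, the paper uses only Markov's inequality on the expectation $T\cdot n^{\tilde\alpha-1}\le n^{\tilde\alpha+\varepsilon}$ to get $\Prob{I>n^{\tilde\alpha+2\varepsilon}}\le n^{-\varepsilon}$; your Chernoff/Azuma step is stronger than needed.
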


\begin{proof}
From  Corollaries \ref{cor:run_is_safe} and \ref{cor:they_are_close} it follows
that whp at every step of LS the number of variables
that are not $d$-righteous is less than $n^{\tilde{\alpha}}$, for some $\tilde\alpha$.

Therefore denoting the number of
different assignments considered by LS by $T$ (note that $T\le\vr n$) and observing that at each step the probability to consider a variable voted to be 0 is $n^{\tilde{\alpha}-1}$ we obtain the following upper bound for the expectation of
the number of non-$d$-righteous variables throughout the run:
$$
T n^{\tilde{\alpha} -1} \leq
 \kappa' n (\ln n) n^{\tilde{\alpha} - 1} =
 \kappa' n^{\tilde{\alpha}} \ln n \leq
 n^{\tilde{\alpha}+ \varepsilon}
$$
for arbitrary $\varepsilon$ with $\tilde{\alpha}+2\varepsilon<1$. We apply Markov inequality and obtain
$
\Prob{I > n^{\tilde{\alpha} + 2\varepsilon}} \leq n^{-\varepsilon},
$
where $I$ denotes the number of variables that do not play $d$-righteously.
Now $\alpha_4$ can be set to be $\tilde{\alpha} + 2\varepsilon$.
\end{proof}

\newcommand{\SCl}{\mathcal{C}}
\newcommand{\dom}{\mathop{\rm dom}}
\newcommand{\im}{\mathop{\rm im}}
\newcommand{\var}{\mathop{\rm var}}

A clause $(\ov x,\ov y,z)$ is called a \emph{cap support}
if there are $w_1,w_2$ such that $(x,w_1,w_2,y,z)$ is a cap in $\vf$.
For a formula $\psi$ we denote the set of variables that occur in it by $\var(\psi)$. For a set of clauses $K$ we  denote by $\bigwedge K$ a CNF formula constructed by conjunction of the clauses. For the sake of simplicity we will write $\var(K)$ instead of $\var\left(\bigwedge K\right)$. In what follows it will be convenient to view a CNF as a sequence of clauses. Note that representation of a CNF is quite natural when we sample a random CNF by generating random clauses. This way every clause occupies certain position in the formula. 
For a set of positions $P$ we denote the formula obtained from $\vf$ by removing all clauses except for occupying positions $P$ by $\vf \downarrow_P$. The set of variables occurring in the clauses in positions in $P$ will be denoted by $\var(P)$. 

We denote by $\mathcal{C}$ the set of all possible clauses over $n$ variables.
Let us fix a real constant $\nu<1$. We will need the following notation:
\begin{itemize}
\item let $[k]$ denote the set of the first $k$ positions of clauses in $\vf$, $V$ be the set of all variables in $\vf$;
\item let $S^{\vf,\nu}$ be the set of positions from
$[n^\nu]$ occupied by clauses that are cap supports in $\vf$, and 
$L^{\vf,\nu}$ the set of variables that occur in clauses in positions $S^{\vf,\nu}$;
\item let $T^{\vf,\nu}$ be set of positions of $\vf$ occupied by clauses containing a variable from $L^{\vf,\nu}$;
\item let $U^{\vf,\nu}$ be the set of positions in $\vf$ occupied by clauses containing a variable 
from $\var\left(\vf \mathop{\downarrow}_{[n^\nu] \setminus S^{\vf,\nu}}\right)$;
\item finally, let $R^{\vf,\nu}=[\varrho n] \setminus (S^{\vf,\nu} \cup U^{\vf,\nu})$;
\item let also $M^{\vf,\nu} = \var(T^{\vf,\nu})$ and $N^{\vf,\nu} = \var(U^{\vf,\nu})$. 
\end{itemize}

Fig.~\ref{fig:sets} pictures the notation just introduced.

\begin{figure}[h]
\centerline{\includegraphics[totalheight=2.5cm,keepaspectratio]{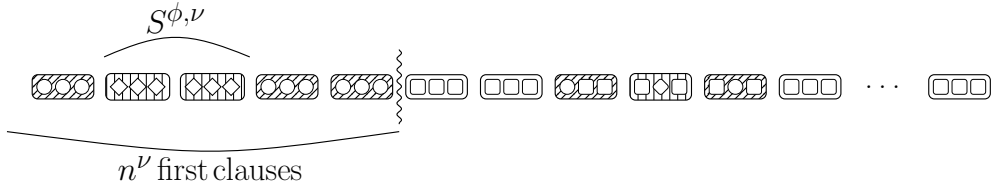}}
\caption{A scheme of a 3-CNF. Every clause is shown as a rectangle with its literals represented by squares inside the rectangle. 
Literals corresponding to variables from $L^{\phi \nu}$ and from $\var\left(\vf \mathop{\downarrow}_{[n^\nu] \setminus S^{\vf,\nu}}\right)$ are shown as    
diamonds and 
circles, respectively. Shaded rectangles with vertical and diagonal lines represent clauses  from $T^{\phi \nu}$ and $U^{\phi \nu}$, respectively.}
\label{fig:sets}
\end{figure}

\begin{lemma}\label{lem:mu_and_nu}
If $\rho \le \kappa \ln n$ and $\kappa<\frac{7}{6}$ then there is $\mu_0$ such that 
for any $\mu<\mu_0$
there is $\nu < 1$ such that whp:
\begin{itemize}
\item[(1)] 
$|S^{\vf,\nu}|\sim n^\mu$;
\item[(2)] 
$M^{\vf,\nu} \cap N^{\vf,\nu} = \varnothing$, that is variables from clauses from $U^{\vf,\nu}$ do not appear in the same clauses with variables from $S^{\vf,\nu}$;
\item[(3)] 
$|M^{\vf,\nu}| = 3|T^{\vf,\nu}|$, that is no variable occurs twice in the clauses from $T^{\vf,\nu}$.
\end{itemize} 
\end{lemma}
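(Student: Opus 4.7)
My plan is to prove the three parts in sequence: part~(1) by the second moment method, and parts~(2)--(3) by union bounds that exploit the smallness of $L^{\vf,\nu}$.

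For part~(1), set $X = |S^{\vf,\nu}|$. By symmetry among the $\vr n$ positions of $\vf$, $\Expect{X}$ equals $\frac{n^\nu}{\vr n}$ times the expected total number of cap-support clauses, and since each cap determines a unique $c_2$, this is essentially the expected number of caps in $\vf$. The computation from Lemma~\ref{lem:close-isolated} gives that this is $\sim \vr^4 n^{1-\frac67\frac{\vr}{\ln n}}$, so $\Expect{X}\sim\vr^3 n^{\nu-\frac67\frac{\vr}{\ln n}}$. Choosing $\nu = \mu + \frac67\frac{\vr}{\ln n} + o(1)$ yields $\Expect{X}=n^{\mu+o(1)}$, and this $\nu$ is strictly below $1$ as long as $\mu<1-\frac67\kp$; I take this as the definition of $\mu_0$, which is positive since $\kp<\frac76$. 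For concentration I repeat the same second-moment analysis as in Lemma~\ref{lem:close-isolated}: the only cap-cap overlap patterns contributing non-negligibly to $\Delta^*$ are the four enumerated alignments, each of conditional probability $O(\vr^4 n^{-2-\frac37\frac{\vr}{\ln n}})$, and imposing the additional constraint that both $c_2$'s lie in $[n^\nu]$ only tightens the bound, so $\Delta^* = o(\Expect{X})$. Chebyshev's inequality then delivers $X\sim\Expect{X}\sim n^\mu$ whp.

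For parts~(2) and~(3) the key a priori bounds are $|L^{\vf,\nu}|\le 3|S^{\vf,\nu}|=O(n^\mu)$ together with the whp bound of $\ln^2 n$ on the degree of every variable in $G(\vf)$ from Lemma~\ref{lem:random}(2); hence $|T^{\vf,\nu}|=O(n^\mu\ln^2 n)$ and, since $|\var(\vf\downarrow_{[n^\nu]\setminus S^{\vf,\nu}})|\le 3n^\nu$, also $|U^{\vf,\nu}|=O(n^\nu\ln^2 n)$. For part~(3) I estimate the expected number of pairs of distinct clauses in $T^{\vf,\nu}$ that share a variable outside $L$: each such coincidence has probability $O(1/n)$, and the union bound over $O(|T|^2)$ pairs yields expectation $O(n^{2\mu-1}\ln^4 n)$, which is $o(1)$ for $\mu<\frac12$, a condition I enforce by shrinking $\mu_0$ if needed. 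The intrinsic sharing of $x_1$ between a cap support $c_2$ and its partner $c_1$ contributes only an additive $O(|S|) = O(n^\mu)$ to the shortfall and is absorbed as a lower-order correction to the identity $|M|=3|T|$. For part~(2), a variable in $M\cap N$ forces some clause of $T$ and some clause of $U$ to share a variable, and an analogous union bound gives expected count $O(|T|\cdot|U|/n) = O(n^{\mu+\nu-1}\ln^4 n)$, which is $o(1)$ provided $\mu+\nu<1$; further shrinking $\mu_0$ guarantees this jointly with the earlier constraint.

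The principal obstacle is the second moment computation in part~(1): one must verify that restricting both $c_2$'s to the first $n^\nu$ positions preserves the tight bound on $\Delta^*$ inherited from Lemma~\ref{lem:close-isolated}, and that the choice of $\mu_0$ can be made uniform in $\vr\le\kp\ln n$. The union-bound arguments in parts~(2) and~(3) are comparatively routine once $\mu_0$ is chosen small enough, the only subtlety being the careful treatment of the unavoidable intra-cap variable sharing, which is a harmless lower-order effect.
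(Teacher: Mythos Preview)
Your treatment of parts~(2) and~(3) is essentially the paper's: bound $|T^{\vf,\nu}|$ and $|U^{\vf,\nu}|$ via the degree cap of Lemma~\ref{lem:random}(2) and take a union bound over pairs of clauses. The paper's parameter choice is cleaner than your iterative shrinking of $\mu_0$: it sets $\mu_0=\al/2$ and $\nu=\mu+1-\al$ once and for all (where $\al$ is the cap-count exponent from Lemma~\ref{lem:close-isolated}), which makes $2\mu<\al<1$ and $\mu+\nu=2\mu+1-\al<1$ automatic.

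For part~(1) the paper takes a shorter route than your direct second-moment computation on $X=|S^{\vf,\nu}|$. It observes that by exchangeability of clause positions, conditioning on the total number of cap supports, their positions form a uniformly random subset of $[\vr n]$. Since Lemma~\ref{lem:close-isolated} already gives $\sim n^\al$ cap supports in all of $\vf$, the number falling in $[n^\nu]$ has mean $\sim n^\al\cdot n^{\nu-1}=n^\mu$ and variance bounded by the mean, so Chebyshev finishes. This sidesteps having to redo the overlap analysis of Lemma~\ref{lem:close-isolated} restricted to $[n^\nu]$; your approach works too but duplicates that effort.

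Your observation about the intrinsic sharing of $x_1$ between a cap support $c_2=(\ov x_1,\ov x_4,x_5)$ and its partner $c_1=(x_1,\ov x_2,\ov x_3)$ is correct and is something the paper's proof glosses over: both clauses contain a variable of $L^{\vf,\nu}$, hence both lie in $T^{\vf,\nu}$, and they share $x_1$, so the literal equality $|M^{\vf,\nu}|=3|T^{\vf,\nu}|$ cannot hold as stated. Your ``lower-order correction'' does not rescue an exact identity either. What the downstream argument (the proof of Proposition~\ref{pro:failing}) actually uses is the weaker assertion that the neighborhoods of \emph{distinct} cap supports in $S^{\vf,\nu}$ are disjoint, and that each variable of $M^{\vf,\nu}$ lies in at most one clause of $T^{\vf,\nu}$ beyond the forced intra-cap pair; the union bound you and the paper both give does establish this. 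So the issue is with the lemma's phrasing rather than with either proof strategy.
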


\begin{proof}
It follows from Lemma~\ref{lem:close-isolated} that for $\varrho \leq \kappa \ln n, \kappa< \frac 76$ there exists $\alpha, 0< \alpha< 1$ such that the number of caps in the formula is $\sim n^\alpha$.
We set 
$$
\mu_0 = \alpha/2,\qquad \nu = \mu + 1 - \alpha.
$$

(1) For a subset $R$ of all positions of clauses in $\phi$ let $\mathcal{C}_R$ denote event ``$R$ is exactly the set of
positions occupied by cap supports''. Obviously for any sets $R_1, R_2, |R_1| = |R_2|$ we have $\Prob{\mathcal{C}_{R_1}} = \Prob{\mathcal{C}_{R_2}}$. Thus positions of the cap supports are selected uniformly at random without repetition. By straightforward computation we have expectation of the number of cap supports among first $n^\nu$ clauses equal approximately $n^\alpha \cdot n^{\nu - 1} = n^{\mu + 1 - \alpha - 1 + \alpha} = n^\mu$ and variance is bounded above by the expectation, so it follows from Chebyshev inequality that random variable ``number of cap supports among first $n^\nu$ clauses'' is whp $\sim n^\mu$.

(2) By Lemma~\ref{lem:random}(2) whp there is no variable that occurs in more than $\ln^2n$ clauses. Therefore $|M^{\vf,\nu}|=O(n^\mu\ln^2n)$ and $|N^{\vf,\nu}|=O(n^\nu\ln^2n)$. These sets are randomly chosen from an $n$-element set, and therefore the probability they have a common element is at most $n^{\mu+\nu-1}\ln^4n$. Due to definition of $\mu$ and $\nu$ we have $\mu+\nu-1 < {\alpha / 2 + \alpha/2 + 1 - \alpha - 1} = 0$. 

(3) Since whp $|T^{\vf,\nu}|=O(n^\mu\ln^2n)$, the probability that two clauses from this set share a variable is bounded above by $n^{2\mu - 1}\ln^4n$. We have $2\mu - 1 < \alpha - 1 <0$ so this probability tends to~0. 
\end{proof}

Let us fix a formula $\vf$ selected accordingly $\Pl$ and $\mu<\frac15$, and let $\nu$ correspond to $\mu$ as in Lemma~\ref{lem:mu_and_nu}. 
Let $T_0$ and $U_0$ be subsets of $[\varrho n]$ such that $T_0 \cap U_0 = \varnothing$, $[n^\nu] \subseteq T_0 \cup U_0$ and let $S_0 = T_0 \cap [n^\nu]$.
We denote by $H_{T_0 U_0}$ a hypothesis stating that $\vf$ is such that $S^{\vf,\nu} = S_0$, $T^{\vf,\nu} = T_0$,  $U^{\vf,\nu} = U_0$ and also $M^{\vf,\nu} \cap N^{\vf,\nu} = \varnothing$, $\left|M^{\vf,\nu}\right| = 3 \left|T^{\vf,\nu} \right| $.

\begin{lemma}\label{lem:hypothesis} 
If for an event $E$ there is a sequence $\delta(n) \goes_{n\goes \infty} 0$ such that for all pairs $(T_0, U_0)$, $|T_0 \cup U_0| < n^{2 \nu}$ we have $\Prob{E | H_{T_0 U_0}} \leq \delta(n)$ then $\Prob{E} \goes_{n \goes \infty} 0$. 
\end{lemma}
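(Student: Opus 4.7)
The plan is a straightforward application of the law of total probability. For any pair $(T_0, U_0)$ subject to the stated structural constraints, the event $H_{T_0 U_0}$ is determined by the values of $S^{\vf,\nu}$, $T^{\vf,\nu}$, $U^{\vf,\nu}$, and different pairs give disjoint events. Their union across all admissible pairs coincides with the event that parts (2) and (3) of Lemma~\ref{lem:mu_and_nu} both hold. Let $G$ denote the further restriction of this union to pairs $(T_0, U_0)$ with $|T_0 \cup U_0| < n^{2\nu}$. The proof splits into two steps: show $\Prob{G} \to 1$, and then sum the conditional bounds.

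For the first step, Lemma~\ref{lem:mu_and_nu} already gives that its conditions (2) and (3) hold whp, so it suffices to show $|T^{\vf,\nu} \cup U^{\vf,\nu}| < n^{2\nu}$ whp. By Lemma~\ref{lem:random}(2), whp every variable appears in at most $\ln^2 n$ clauses. Lemma~\ref{lem:mu_and_nu}(1) gives $|S^{\vf,\nu}| \sim n^\mu$, hence $|L^{\vf,\nu}| \le 3 n^\mu$ and $|T^{\vf,\nu}| = O(n^\mu \ln^2 n)$. Similarly, the variables occurring in clauses in positions $[n^\nu] \setminus S^{\vf,\nu}$ number at most $3 n^\nu$, so $|U^{\vf,\nu}| = O(n^\nu \ln^2 n)$. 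Since $0 < \mu < \nu < 1$, we get $|T^{\vf,\nu} \cup U^{\vf,\nu}| = O(n^\nu \ln^2 n) \ll n^{2\nu}$ for $n$ large enough, so $\Prob{G} \to 1$.

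For the second step, the law of total probability yields
\[
\Prob{E} \;\le\; \Prob{\ov G} \;+\; \sum_{(T_0, U_0):\, |T_0 \cup U_0| < n^{2\nu}} \Prob{E \mid H_{T_0 U_0}}\, \Prob{H_{T_0 U_0}} \;\le\; \Prob{\ov G} + \delta(n),
\]
where the last inequality applies the hypothesis to each term and uses that the remaining probabilities sum to at most $1$. Both summands tend to $0$, so $\Prob{E} \to 0$. There is no real obstacle in this argument; the only content beyond bookkeeping is the size bound for $|T^{\vf,\nu} \cup U^{\vf,\nu}|$, which is an immediate consequence of the max-degree estimate of Lemma~\ref{lem:random}(2) combined with Lemma~\ref{lem:mu_and_nu}(1).
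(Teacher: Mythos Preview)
Your proof is correct and follows essentially the same approach as the paper's: both decompose $\Prob{E}$ via the law of total probability over the disjoint events $H_{T_0U_0}$ with $|T_0\cup U_0|<n^{2\nu}$, bound the sum by $\delta(n)$, and handle the complement using Lemma~\ref{lem:mu_and_nu}(2),(3) together with Lemma~\ref{lem:random}(2). Your write-up is in fact more explicit than the paper's about why $|T^{\vf,\nu}\cup U^{\vf,\nu}|=O(n^\nu\ln^2 n)\ll n^{2\nu}$ holds whp, where the paper simply cites Lemma~\ref{lem:random}(2).
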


\begin{proof}
We can bound probability of event $E$ as
\begin{eqnarray*}
\Prob{E}&\leq& \sum_{{\tiny T_0, U_0 : |T_0 \cup U_0| < n^{2 \nu} }}
(\Prob{E | H_{T_0 U_0}} \Prob{H_{T_0 U_0}} \\
& & \qquad + \Prob{M^{\vf,\nu}\cap N^{\vf,\nu} \neq \varnothing \text{ or } \left|M^{\vf,\nu}\right| < 3 \left|T^{\vf,\nu} \right| \text{ or } |T_0 \cup U_0| \geq n^{2 \nu} }) \\
 &\leq& \delta(n) + \Prob{M^{\vf,\nu}\cap N^{\vf,\nu} \neq \varnothing} + \Prob{\left|M^{\vf,\nu}\right| < 3 \left|T^{\vf,\nu} \right|} + \Prob{|T_0 \cup U_0| \geq n^{2 \nu}}.
\end{eqnarray*}
By Lemma~\ref{lem:mu_and_nu} probabilities of events $M^{\vf,\nu}\cap N^{\vf,\nu} \neq 
\varnothing$ and $\left|M^{\vf,\nu} \right| < 3 \left|T^{\vf,\nu} \right|$ tend to 0 as $n$ approaches infinity. By Lemma~\ref{lem:random} (2) we have $|T_0 \cup U_0| < n^{2 \nu}$ whp. Thus we obtain the result.
\end{proof}

\begin{observ}\label{its_random}
If $\vf$ is selected according to $\Pl$ conditioned to $H_{T_0 U_0}$ 
then formula $$\vf \downarrow _{[\varrho n] \setminus (T_0 \cup U_0)}$$ 
has the same distribution as if it was generated by picking clauses 
from all clauses over variables\lb $V \setminus \var([n^\nu])$ uniformly at random.
\end{observ}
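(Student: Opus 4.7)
The plan is to exploit the fact that under $\Pl$ the $\varrho n$ clauses are drawn i.i.d.\ uniformly from the set $\SCl^+$ of all 3-clauses over $V$ that are not of type $(-,-,-)$. Let $\mathcal D$ denote this per-position law, so that $\vf$ is sampled as $\mathcal D^{\otimes \varrho n}$. Because of this product structure, to identify the conditional distribution of $\vf\downarrow_{[\varrho n]\setminus(T_0\cup U_0)}$ given $H_{T_0 U_0}$ it is enough to show that $H_{T_0 U_0}$ decomposes, once the clauses in $T_0 \cup U_0$ are fixed, into a constraint on those positions together with a purely \emph{per-position} constraint on the remaining positions $R_0 := [\varrho n]\setminus(T_0\cup U_0)$.

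First I would check that $[n^\nu]\subseteq T_0\cup U_0$: every position $i\in[n^\nu]$ is either a cap support, hence in $S_0\subseteq T_0$, or it lies in $[n^\nu]\setminus S_0$, in which case its variables all belong to $\var(\vf\downarrow_{[n^\nu]\setminus S_0})$ and so $i\in U_0$. Consequently $W := \var([n^\nu])$ is a deterministic function of $\vf\downarrow_{T_0\cup U_0}$ alone. Next I would unpack the definitions of $T^{\vf,\nu}$ and $U^{\vf,\nu}$ for positions outside $[n^\nu]$: for any $i\in[\varrho n]\setminus[n^\nu]$, one has $i\in T^{\vf,\nu}$ iff the clause at position $i$ contains a variable of $\var(S_0)=L^{\vf,\nu}$, and $i\in U^{\vf,\nu}$ iff it contains a variable of $\var([n^\nu]\setminus S_0)$. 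Hence the condition ``$i\in R_0$'' is precisely the local condition $\var(\vf\downarrow_i)\cap W=\varnothing$. All the remaining ingredients of $H_{T_0 U_0}$ (the identity $S^{\vf,\nu}=S_0$, the disjointness $M^{\vf,\nu}\cap N^{\vf,\nu}=\varnothing$, the cardinality $|M^{\vf,\nu}|=3|T^{\vf,\nu}|$, and the description of which positions in $T_0$ resp.\ $U_0$ carry which kind of variables) depend only on the clauses in $T_0\cup U_0$.

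Putting this together, fix any realization of $\vf\downarrow_{T_0\cup U_0}$ consistent with $H_{T_0 U_0}$; this also fixes $W$. By independence of the coordinates of $\mathcal D^{\otimes \varrho n}$, the clauses at the positions of $R_0$ remain mutually independent, each still distributed as $\mathcal D$ but further conditioned on the per-position event $\var(\vf\downarrow_i)\cap W=\varnothing$. The law $\mathcal D$ conditioned on using no variable of $W$ is precisely the uniform distribution on the planted-valid clauses over $V\setminus W = V\setminus\var([n^\nu])$, which is the generation rule claimed in the observation. Since this description depends on $\vf\downarrow_{T_0\cup U_0}$ only through $W$, marginalizing over all realizations consistent with $H_{T_0 U_0}$ leaves the per-coordinate law unchanged, and the result follows. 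No serious obstacle is expected here; the only care required is the careful bookkeeping that separates the part of $H_{T_0 U_0}$ constraining positions in $T_0\cup U_0$ from the part constraining positions in $R_0$, after which the product structure of $\Pl$ does the rest.
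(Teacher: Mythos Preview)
Your approach matches the paper's: both argue that, under $H_{T_0 U_0}$, the positions in $R_0$ are constrained only by the per-position variable-avoidance condition $\var(\vf\downarrow_i)\cap\var([n^\nu])=\varnothing$, so that by the product structure of $\Pl$ they remain i.i.d.\ uniform over the planted clauses on $V\setminus\var([n^\nu])$. The paper casts this as a counting symmetry (the number of ways to fill the $T_0\cup U_0$ positions compatibly is the same for every admissible $\psi\downarrow_{R_0}$), and your write-up is simply a more explicit unpacking of that same idea.

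There is, however, one step in your decomposition that does not go through as written. You assert that the identity $S^{\vf,\nu}=S_0$ depends only on the clauses in $T_0\cup U_0$, but the definition of a cap support requires a partner clause $c_1=(x_1,\ov w_1,\ov w_2)$ with $w_1,w_2$ \emph{not} $0$-isolated. The position of $c_1$ does lie in $T_0$ (since $x_1\in L^{\vf,\nu}$); but under the hypotheses $|M^{\vf,\nu}|=3|T^{\vf,\nu}|$ and $M^{\vf,\nu}\cap N^{\vf,\nu}=\varnothing$ each of $w_1,w_2$ occurs exactly once among the clauses at positions $T_0\cup U_0$, and that one occurrence is negated (in $c_1$ itself). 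Hence whether $w_1,w_2$ are non-$0$-isolated --- i.e.\ whether they occur positively in some $(+,-,-)$ clause --- is decided entirely by the clauses at positions of $R_0$. So whether a given position in $[n^\nu]$ carries a cap support is not a function of $\vf\downarrow_{T_0\cup U_0}$ alone, and the event $S^{\vf,\nu}=S_0$ genuinely couples $R_0$ with $T_0\cup U_0$, contrary to the clean factorization you invoke. The paper's two-line proof asserts the corresponding symmetry without justification and so elides the same point; because you chose to spell the decoupling out, the gap becomes visible in your version.
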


\begin{proof}
Let $\mathcal{C}'$ be the set of all clauses over variables in $V \setminus \var([n^\nu])$ and $R_0=[\vr n]\setminus(T_0\cup U_0)$. Take a formula $\psi$ such that positions from $R_0$ of this formula are occupied by clauses from $\mathcal{C}'$. It suffices to observe that the number of formulas $\psi'$ such that $\psi'\downarrow_{R_0}=\psi\downarrow_{R_0}$, $S^{\psi',\nu}=S_0$, $T^{\psi',\nu}=T_0$, $U^{\psi',\nu}=U_0$ is the same for any $\psi$. 
So since all possible formulas over variables from some set are equiprobable a random formula is generated by random sampling of clauses.
\end{proof}

\begin{proof}[of Proposition~\ref{pro:failing}]
We will bound probability of success of Local Search under a hypothesis of the form ${H_{T_0 U_0}}$ and apply Lemma~\ref{lem:hypothesis} to get the result. Let $\al_4$ be the exponent corresponding to $\vr$ by Lemma~\ref{lem:innocence}, and choose $\mu$ and $\nu$ such that $\al_4+2\mu<1$.

Let $M=M^{\vf,\nu}$ and $L=L^{\vf,\nu}$. We split formula $\vf$ into $\vf_1 = \vf\downarrow_{T_0}$ and $\vf_2 = \vf\downarrow_{[\varrho n] \setminus T_0}$ and first consider a run of LS applied to $\vf_2$ only.
Formula $\vf_2$ can in turn be considered as the conjunction of $\vf_{2 1} = \vf\downarrow_{U_0}$ and $\vf_{2 2} = \vf \downarrow_{[\varrho n] \setminus (T_0 \cup U_0)}$. In Fig.~\ref{fig:sets} formula $\vf_{1}$ consists of clauses shaded with vertical lines, formula $\vf_{2 1}$ of clauses shaded with diagonal lines and formula $\vf_{2 2}$ of clauses that are not shaded. 
By Observation~\ref{its_random} formula $\vf_{2 2}$ is sampled according to 
$$
\Phi^{\tt plant}(n - \delta_1(n), n \varrho - \delta_2(n))
$$ 
modulo names of variables where $\delta_1(n)$ and $\delta_2(n)$ are $o(n)$. So formula $\vf_{2}$ is sampled according to 
$$
\Phi^{\tt plant}_{2 \mu}(n - \delta_1(n), n \varrho - \delta_2(n)).
$$
By Lemma~\ref{lem:innocence} the number of variables that do not play $2$-righteously during run of LS on $\vf_2$ is bounded from above by $n^{\alpha_4}$ for a certain $\al_3<1$.  

We consider coupling $(LS_\vf, LS_{\vf_2})$ of runs of LS on $\vf$ and $\vf_2$, denoting 
assignments obtained by the runs of the algorithm at step $t$ by $\vu_\vf(t)$ and 
$\vu_{\vf_2}(t)$ respectively. 
Let $K$ be the set of those variables which do not belong to $L$ (squares and circles in Fig.~\ref{fig:sets}). Formula $\vf_2$ is a 3-CNF containing only variables from $K$. For an assignment of values of all variables $\vu$ we will denote by $\vu|_K$ its restriction onto variables from $K$.
We make process $LS_\vf$ start with a random assignment $\vu_\vf(0) = \vu_\vf^0$ to all variables, and $LS_{\vf_2}$ with a random assignment $\vu_{\vf_2}(0) = \vu_{\vf_2}^0$ to variables in $K$, such that $\vu_\vf^0|_K = \vu_{\vf_2}^0$. 
Now the algorithms work as follows. At every step a random variable $x_i$ is chosen. Process $LS_\vf$ makes its step, and process $LS_{\vf_2}$ makes its step if $x_i \in K$.

Whp $LS_{\vf_2}$ will run with at most $n^{\alpha_4}$ variables that do not play $2$-righteously. Let $W$ denote the set of such variables. Variables in formula $\vf_1$ are selected uniformly at random so if $\alpha_4 + 2\mu < 1$ then whp set $M$
does not intersect with $W$. Hence, every time $LS_\vf$ considers some variable from $M$ it is $2$-righteous in $\vf_2$ and belongs to at most one clause of $\vf_1$. Therefore such a variable is at least $1$-righteous $\vf$ and is flipped to 1, or stays 1, whichever is to happen for $LS_{\vf_2}$.  Thus whp at every step of $(LS_\vf, LS_{\vf_2})$ we have $\vu_\vf(t)|_K = \vu_{\vf_2}(t)$. In the rest of the proof we consider only this highly probable case.
  
Consider some cap support $c_i = (\ov x_1, \ov x_4, x_5)$ occupying a position $i\in[n^\nu]$ and such that $x_1 = 0, x_4 = 1, x_5 = 0$ at time 0, and a set $P_{c_i}$ of variables occurring in clauses that contain variables $\var(c_i)$ (obviously $\var(c_i) \subseteq P_{c_i}$). Let $c_j$ be the clause that forms a cap with $c_i$. 
We say that a variable is {\em discovered} at step $t$ if it is considered for the first time at step $t$. Let $p_1, \dots, p_k$ be an ordering of elements of $P_{c_i}$ according to the step of their discovery. In other words if variable $p_1$ is the first variable from $P_{c_i}$ that is discovered, $p_k$ was the last. In the case some variables are not considered at all, we place them in the end of the list in a random order. Observe that all variables that play at least $1$-righteously are discovered at some step. All orderings of variables are equiprobable, hence, the probability of variables $\var(c_i)$ to occupy places $p_{k-2}, p_{k-1}$ and $p_k$ equals $3! / k (k-1) (k-2)$. We will call this ordering {\em unlucky}.

Let us consider what happens if the order of discovery of $P_{c_i}$ is unlucky. All variables in $P_{c_i} \setminus \var(c_i)$ play $1$-righteously, therefore once they are discovered by $LS_\vf$ they equal to 1.  Thus when $x_1, x_4, x_5$ are finally considered all clauses they occur in are satisfied, except for $c_j$. So variables $x_1, x_4, x_5$ do not change their values and the clause $c_j$ remains unsatisfied by the end of the work of $LS_\vf$. 

By Lemma~\ref{lem:random}(2) whp no vertex has degree greater than $\ln^2 n$, so the size of the set $P_{c_i}$ is bounded above by $3 \ln^2 n$. Thus the probability of event $Unluck(i) = $``order of discovery of $\var(c_i)$ is unlucky'' is greater than $\frac{1}{\ln^6 n}$. Thus, the expectation of $|\{ i | Unluck(i)\}|$
equals 
$$
\frac{|S_0|}{\ln^6 n} = \frac{n^\mu}{\ln^6 n}.
$$
Any variable whp occurs in clauses from $T^{\vf,\nu}$ at most once, hence there is no variable that occurs in the same clause with a variable from $c_{i_1}$ and a variable from $c_{i_2}$ for $i_1,i_2\in S_0$, $i_1\ne i_2$. This implies that events of the form $Unluck(i)$
are independent. Therefore random variable $|\{ i | Unluck(i)\}|$ is Bernoulli and, as its expectation tends to infinity, the probability that it equals to $0$ goes to 0. Since unlucky ordering of at least one cap support leads to failure of the LS this proves the result.
\end{proof}
  
\bibliographystyle{plain}

\end{document}